\documentclass[sigconf,screen,nonacm]{acmart}

\renewcommand\footnotetextcopyrightpermission[1]{}

\usepackage{amsmath,amsfonts,amsthm,mathtools,bm}
\usepackage{booktabs,tabularx}
\usepackage{enumitem}
\usepackage{microtype}
\usepackage{float}
\usepackage{placeins}
\usepackage{pgfplots}
\usepgfplotslibrary{groupplots}
\pgfplotsset{compat=1.18}
\pgfplotsset{
  caspaxis/.style={
    grid=both,
    major grid style={draw=gray!35, line width=0.22pt},
    minor grid style={draw=gray!20, line width=0.16pt},
    tick label style={font=\scriptsize},
    label style={font=\small},
    x label style={at={(axis description cs:0.5,1.04)}, anchor=south},
    title style={font=\small},
    line cap=round,
    line join=round,
  },
  caspgroup/.style={
    caspaxis,
    every axis plot/.append style={line width=1.15pt},
  },
}

\AtBeginDocument{%
  }

\newcommand{\E}{\mathbb{E}}
\newcommand{\Pbb}{\mathbb{P}}
\newcommand{\R}{\mathbb{R}}
\newcommand{\cX}{\mathcal{X}}
\newcommand{\cA}{\mathcal{A}}

\newcommand{\argmax}{\operatorname*{argmax}}

\newcommand{\DR}{\mathrm{DR}}
\newcommand{\IPS}{\mathrm{IPS}}

\newcommand{\CASP}{\textsc{CASP}}
\newtheorem{theorem}{Theorem}
\newtheorem{proposition}[theorem]{Proposition}
\newtheorem{corollary}[theorem]{Corollary}
\theoremstyle{definition}
\newtheorem{definition}[theorem]{Definition}
\newtheorem{assumption}[theorem]{Assumption}
\theoremstyle{remark}
\newtheorem{remark}[theorem]{Remark}

\title{\CASP: Support-Aware Offline Policy Selection for Two-Stage Recommender Systems}

\author{Nilson Chapagain}
\affiliation{%
  \institution{Texas A\&M University}
  \city{College Station}
  \state{TX}
  \country{USA}
}
\email{nchapagain@tamu.edu}

\begin{document}

\begin{abstract}
Two-stage recommender systems first choose a candidate generator and then rank items within the generated set. Because the generator decides which items are available to the ranker, changing the generator changes both the policy value and the data support used to estimate that value. This creates an offline selection problem that standard single-stage objectives do not capture: a policy may look good under a retrieval score or a raw off-policy value estimate, but still be unreliable if it depends on weakly supported generator--item pairs. We propose \CASP{} (Coupled Action-Set Pessimism), a support-aware offline selector for finite libraries of two-stage recommender policies. \CASP{} combines doubly robust value estimation with a support-burden penalty. We show that stagewise rules that ignore downstream continuation value can be arbitrarily suboptimal, and we derive population, finite-class, and reconstructed-propensity guarantees for conservative selection. In simulations and a reconstructed \texttt{MovieLens 1M} application, \CASP{} selects lower-burden policies when estimated value and support credibility are in tension.
\end{abstract}

\keywords{two-stage recommender systems, offline policy selection, off-policy evaluation, support mismatch, intelligent recommender systems, conservative learning}

\maketitle

\section{Introduction}

Consider a marketplace or media recommender that swaps one candidate generator for another. The incumbent generator surfaces mostly popular items and looks strong under recall-oriented diagnostics. The challenger exposes rarer items that the downstream ranker converts into better engagement, margin, or long-term satisfaction. If we judge stage~1 only by upstream retrieval metrics, or if we learn the two stages separately, we can prefer the wrong generator because stage~1 does not merely reweight downstream rewards: it changes the set of items the ranker is allowed to choose from.

That dependence is not an implementation nuisance. It is a structural property of modern recommender-system design. Large-scale recommenders commonly use multi-stage pipelines in which a retrieval or candidate-generation module first narrows a very large item universe, and a downstream ranking module then optimizes among the exposed candidates \citep{covington2016youtube,chen2019topk}. This architecture is computationally necessary, but it also means that the first stage determines which downstream actions can even be considered. Classical recommender objectives, including matrix-factorization and implicit-feedback ranking losses, are useful for prediction and representation learning \citep{hu2008implicit,rendle2009bpr}, but they do not by themselves answer the deployment question raised by logged two-stage systems: whether a new generator--ranker policy is both valuable and evaluable under the support created by the logging policy.

Standard off-policy evaluation and learning methods from contextual bandits provide strong foundations for value estimation under logged feedback \citep{li2011offline,dudik2014doublyrobust,swaminathan2015crm,swaminathan2015jmlr,wang2017optimal}. Related counterfactual learning-to-rank and recommender-system work emphasizes that recommendation data are biased by the exposure process and that offline objectives must correct for the policy that generated the data \citep{bottou2013counterfactual,joachims2017unbiasedltr,schnabel2016recommendations,gilotte2018offline,agarwal2018cltr,joachims2018deep}. However, much of this literature treats the action set as fixed once the context is observed, or focuses on correcting exposure bias for a single ranking or recommendation decision. In a two-stage recommender, the overlap condition is itself shaped upstream: a generator can deprive the downstream ranker of the very items that would have made a policy valuable.

Existing two-stage recommender work already shows that candidate generation and ranking interact \citep{ma2020twostage}, and recent candidate-generator evaluation work argues that retrieval should be judged by downstream utility rather than upstream retrieval metrics alone \citep{wang2025candidategen}. Work on slate recommendation and sequential recommendation further shows that structured recommendation outputs create difficult counterfactual evaluation problems because the value of an action may depend on other exposed actions or later user responses \citep{swaminathan2017slate,mcinerney2020sequentialslate,ie2019slateq}. Work on deficient support and large action spaces shows that weak overlap can make value estimates unstable even before adding two-stage architectural coupling \citep{sachdeva2020deficientsupport,saito2022embeddings,saito2023offcem,sachdeva2024policyconvolution}. What remains open is an offline selection framework that treats stage-induced feasible support as a first-class object in both learning and deployment-facing model selection.

We study offline policy selection for exactly this setting. Stage~1 chooses a candidate generator from a finite library; stage~2 chooses an item from the induced candidate set. The object of interest is not only the value of a policy, but also whether that value is credible under the support created by the logged two-stage system. This distinction is important for intelligent recommender systems because offline model selection is often used as a deployment gate: an engineer or analyst must decide whether a new generator--ranker policy is safe enough to test, not merely whether it has the largest offline score.

We propose \CASP{}, for \emph{Coupled Action-Set Pessimism}. The method is coupled because the generator determines downstream feasibility, and it is pessimistic because unsupported or weakly supported policy choices should be discounted before deployment. In operational terms, \CASP{} acts as a support-aware decision layer: it takes a logged two-stage recommender pipeline, a candidate policy library, reward and propensity estimates, and returns both a selected policy and a support report describing how much the policy depends on weak logged coverage.

The paper is deliberately scoped. We do not claim a general surrogate-consistency theory for all two-stage policy classes, nor do we claim that conservative selection should always maximize raw offline value. Instead, we study a finite policy-library setting that is common in recommender-system experimentation, where teams compare a manageable set of generator and ranker configurations before deployment. This setting is closely related to counterfactual model selection in recommender systems, but the key difficulty is different from ordinary exposure correction: the candidate generator changes the feasible downstream action set, so weak support is created by the architecture itself. Within this setting, the paper asks a practical question: when offline value and support credibility disagree, how should the system select a policy?

This article makes four contributions:
\begin{enumerate}[leftmargin=1.5em]
    \item We formulate support-aware offline policy selection for two-stage recommender systems, where the first-stage generator induces the feasible action set available to the downstream ranker.
    \item We show that continuation-blind stagewise optimization can fail structurally: if the first-stage proxy ignores downstream continuation value, it can select a generator that is arbitrarily suboptimal in end-to-end value.
    \item We introduce \CASP{}, a conservative selector that combines doubly robust value estimation with a support-burden penalty. The burden term is tied to the second moment of the coupled importance ratio, so it directly measures how much a candidate policy relies on weak logged support.
    \item We evaluate the method in simulations and in a reconstructed semi-synthetic \texttt{MovieLens 1M} pipeline. The empirical claim is intentionally modest: \CASP{} is not a universal raw-value winner, but it traces a lower-burden learned frontier and is most useful when high estimated value and support credibility are in tension.
\end{enumerate}

\section{Problem Formulation}

We observe i.i.d.\ interactions
\[
O_i = (X_i, A_{1i}, A_{2i}, Y_i), \qquad i=1,\dots,n,
\]
where $X_i \in \cX$ is the request context, user state, or session context; $A_{1i} \in \cA_1=\{1,\dots,K_1\}$ is a stage-1 candidate-generator choice; and $A_{2i} \in \cA_2=\{1,\dots,K_2\}$ is the final chosen item. The stage-1 action induces a nonempty feasible set
\[
S(x,a_1)\subseteq \cA_2,
\]
and the second-stage choice must satisfy $A_2 \in S(X,A_1)$. Throughout this article, stage~1 is a finite generator/template choice and stage~2 is a single-item choice from the induced feasible set.

A feasible target policy is a pair $\pi=(\pi_1,\pi_2)$, where
\[
\pi_1(a_1 \mid x)
\qquad \text{and} \qquad
\pi_2(a_2 \mid x,a_1)
\]
satisfy $\pi_2(a_2 \mid x,a_1)=0$ whenever $a_2\notin S(x,a_1)$. The logged behavior policy is denoted by $\mu=(\mu_1,\mu_2)$ with the same support restriction. Let
\[
q(x,a_1,a_2)=\E[Y\mid X=x,A_1=a_1,A_2=a_2].
\]
Under sequential ignorability, consistency, and support compatibility, the end-to-end value of a target policy is
\begin{equation}
\label{eq:value}
V(\pi)
=
\E\Bigg[
\sum_{a_1\in\cA_1}\pi_1(a_1\mid X)
\sum_{a_2\in S(X,a_1)}\pi_2(a_2\mid X,a_1)q(X,a_1,a_2)
\Bigg].
\end{equation}

The key structural point is that stage 1 matters through both exposure and restriction. A candidate generator that looks strong under an upstream proxy can still be poor if it suppresses items that a downstream ranker would have used effectively. Conversely, a generator that looks weak under an isolated retrieval metric may be valuable if it exposes a feasible set that materially improves downstream value.

\paragraph{Running example.}
The generator-swap example from the introduction illustrates the main point: the relevant object is not a generator in isolation, but a coupled two-stage policy whose value and credibility both depend on the feasible support induced at stage~1.

\begin{assumption}[Finite actions and bounded rewards]
\label{ass:finite}
The action spaces $\cA_1$ and $\cA_2$ are finite, every feasible set $S(x,a_1)$ is nonempty, and $0\le Y\le M$ almost surely for some finite $M>0$.
\end{assumption}

\begin{assumption}[Sequential ignorability and consistency]
\label{ass:ignorability}
For every feasible pair $(a_1,a_2)$, $Y(a_1,a_2)\perp (A_1,A_2)\mid X$, and if $(A_1,A_2)=(a_1,a_2)$ then $Y=Y(a_1,a_2)$.
\end{assumption}

\begin{assumption}[Support compatibility]
\label{ass:overlap}
For every target policy $\pi$ under consideration and almost every $x$, whenever $\pi_1(a_1\mid x)\pi_2(a_2\mid x,a_1)>0$, we have $\mu_1(a_1\mid x)\mu_2(a_2\mid x,a_1)>0$. When a quantitative bound is needed, assume the logged propensities are bounded below by $\nu \in (0,1)$ on the relevant support.
\end{assumption}

\paragraph{Deployment target.}
The target of the paper is a deployment-facing selection problem rather than only an estimation problem. A recommender team has a finite library $\Pi$ of candidate two-stage policies, each corresponding to a feasible generator--ranker configuration. The offline system must return three objects:
\[
\widehat \pi \in \Pi,
\qquad
\widehat V(\widehat \pi),
\qquad
\widehat B(\widehat \pi),
\]
where $\widehat \pi$ is the selected policy, $\widehat V(\widehat \pi)$ is its estimated value, and $\widehat B(\widehat \pi)$ is a support diagnostic. A policy with high estimated value but very large support burden should be interpreted as a risky deployment candidate because its apparent value depends on parts of the generator--item space that were weakly represented in the logged data. This deployment interpretation is the reason the paper treats support as part of the model-selection objective rather than only as an after-the-fact diagnostic.

\section{Why Stagewise Optimization Can Miss End-to-End Utility}

The correct first-stage target is not an isolated retrieval score. It is the continuation value induced by the downstream feasible set.

\begin{definition}[Continuation value]
For any feasible second-stage rule $\pi_2$, define
\[
m_{\pi_2}(x,a_1)
=
\sum_{a_2\in S(x,a_1)} \pi_2(a_2\mid x,a_1)q(x,a_1,a_2),
\]
and define the pointwise optimal continuation value
\[
m^\star(x,a_1)=\max_{a_2\in S(x,a_1)} q(x,a_1,a_2).
\]
\end{definition}

\begin{proposition}[Correct stage-1 target]
\label{prop:dynamic}
Under Assumption~\ref{ass:finite}, for any fixed stage-1 rule $\pi_1$,
\[
\sup_{\pi_2}V(\pi_1,\pi_2)
=
\E\Bigg[\sum_{a_1\in \cA_1}\pi_1(a_1\mid X)m^\star(X,a_1)\Bigg].
\]
Consequently, a value-optimal first-stage rule can be chosen by maximizing $m^\star(x,a_1)$ pointwise in $a_1$.
\end{proposition}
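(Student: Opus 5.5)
The plan is to prove the identity by matching an upper bound with an achieving (or approximately achieving) second-stage rule, then read off the pointwise statement about $\pi_1$.

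\emph{Upper bound.} Fix $\pi_1$ and any feasible $\pi_2$. Using \eqref{eq:value} and the definition of the continuation value, write $V(\pi_1,\pi_2)=\E[\sum_{a_1}\pi_1(a_1\mid X)m_{\pi_2}(X,a_1)]$. For each $(x,a_1)$, $\pi_2(\cdot\mid x,a_1)$ is a probability vector supported on $S(x,a_1)$, so $m_{\pi_2}(x,a_1)$ is a convex combination of $\{q(x,a_1,a_2):a_2\in S(x,a_1)\}$. Hence $m_{\pi_2}\le m^\star$ pointwise. Since $\pi_1\ge 0$, the inequality survives multiplication and summation. Boundedness of $Y$ (Assumption~\ref{ass:finite}) gives $0\le q\le M$, so all expectations are finite, and monotonicity of expectation yields $V(\pi_1,\pi_2)\le \E[\sum_{a_1}\pi_1 m^\star]$ for every $\pi_2$.

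\emph{Attainment.} Define the greedy rule $\pi_2^\star(a_2\mid x,a_1)=\mathbf 1\{a_2=a^\star(x,a_1)\}$, where $a^\star(x,a_1)$ is the smallest index in $\argmax_{a_2\in S(x,a_1)}q(x,a_1,a_2)$. The maximum exists because $S(x,a_1)$ is finite and nonempty. The rule is feasible, and $m_{\pi_2^\star}=m^\star$, so equality holds and the supremum is actually a maximum.

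The one technical point, which I expect to be the only obstacle, is measurability of $x\mapsto a^\star(x,a_1)$, needed so that $\pi_2^\star$ is a legitimate policy and the expectation is defined. I would handle it by noting that $q(\cdot,a_1,a_2)$ is measurable as a version of a conditional expectation, and I would take $x\mapsto S(x,a_1)$ to be measurable, as is implicitly assumed throughout. Then $\{a^\star=k\}$ is a finite Boolean combination of sets of the form $\{q(\cdot,a_1,k)\ge q(\cdot,a_1,j)\}$ and $\{k\in S(\cdot,a_1)\}$, hence measurable. If one prefers to avoid this, the same identity follows by taking $\pi_2$ that are $\epsilon$-optimal on a finite partition, but the tie-breaking construction suffices.

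\emph{Consequence.} The right-hand side is linear in $\pi_1$, and for each $x$ it is at most $\max_{a_1}m^\star(x,a_1)$. This bound is attained by the deterministic rule placing all mass on a (measurably tie-broken) maximizer of $m^\star(x,\cdot)$. Hence a value-optimal $\pi_1$ maximizes $m^\star(x,a_1)$ pointwise, and pairing it with $\pi_2^\star$ gives a jointly optimal policy.
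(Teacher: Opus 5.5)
Your proposal is correct and follows essentially the same route as the paper, which phrases the stage-2 step as maximizing a linear functional over the simplex on $S(x,a_1)$ at an extreme point (your convex-combination upper bound plus greedy attainment) and then applies the same linearity argument to $\pi_1$. Your explicit tie-breaking rule and measurability check are extra care that the paper omits, not a different argument.
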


Proposition~\ref{prop:dynamic} formalizes the core intuition behind the paper: stage 1 should be evaluated through the downstream feasible set it creates, not through a local retrieval score alone.

We now define a generic stagewise proxy rule.

\begin{definition}[Continuation-blind separable stagewise family]
Fix a stage-1 proxy score $g:\cX\times\cA_1\to\R$ and a stage-2 score $h$. A continuation-blind separable stagewise family first selects
\[
\pi_1^{\mathrm{sep}}(\cdot \mid x)
\in
\argmax_{\rho\in\Delta(\cA_1)}\sum_{a_1}\rho(a_1)g(x,a_1),
\]
and then, for each realized $(x,a_1)$, selects
\[
\pi_2^{\mathrm{sep}}(\cdot \mid x,a_1)
\in
\argmax_{\rho\in\Delta(S(x,a_1))}\sum_{a_2\in S(x,a_1)}\rho(a_2)h(x,a_1,a_2).
\]
We call the family continuation-blind because the first-stage score is driven by $g$ and need not agree with the continuation value $m^\star(x,a_1)$.
\end{definition}

\begin{theorem}[Class-level failure of continuation-blind separable rules]
\label{thm:stagewise}
Let
\[
A_g^\star(x_0)=\argmax_{a_1\in\cA_1} g(x_0,a_1).
\]
Suppose there exists a context $x_0$ and a stage-1 action $b\in \cA_1\setminus A_g^\star(x_0)$. Then for any $M>0$ there exists a two-stage recommendation problem satisfying Assumptions~\ref{ass:finite}--\ref{ass:overlap}, with rewards in $[0,M]$, for which every rule in the separable stagewise family puts first-stage mass only on $A_g^\star(x_0)$ at $x_0$, while every end-to-end value-optimal policy selects $b$ at $x_0$. Consequently, the resulting value gap can be made equal to $M$.
\end{theorem}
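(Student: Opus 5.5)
The plan is a direct construction. The scores $g$ and $h$, the context $x_0$, and the action $b\notin A_g^\star(x_0)$ are given. We keep the action spaces and the scores fixed and choose the context distribution, the feasible sets $S(x,a_1)$, the reward function $q$, and the logging policy $\mu$ so that the stage-1 ranking induced by $g$ is exactly reversed relative to $m^\star$.

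\emph{Step 1: the instance.} Let $X\equiv x_0$ almost surely; this point mass removes the expectation in \eqref{eq:value}. Set $Y$ to be deterministic given $(x_0,a_1,a_2)$, with
\[
q(x_0,b,a_2)=M \quad\text{for all } a_2\in S(x_0,b),
\qquad
q(x_0,a_1,a_2)=0 \quad\text{for all } a_1\neq b,\ a_2\in S(x_0,a_1).
\]
The feasible sets can be arbitrary nonempty subsets, for example $S(x_0,a_1)=\cA_2$. Making every item under $b$ worth $M$ means that whatever $\pi_2^{\mathrm{sep}}$ selects under $b$, driven by $h$, is irrelevant, so $h$ never has to be controlled. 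Take $\mu_1$ and $\mu_2$ uniform on $\cA_1$ and on each $S(x_0,a_1)$. Every propensity is then at least $\nu=1/(K_1K_2)$, so Assumption~\ref{ass:overlap} holds for every feasible target policy. Because the rewards are deterministic functions of the actions and $X$ is constant, Assumption~\ref{ass:ignorability} holds trivially. Assumption~\ref{ass:finite} holds with $Y\in\{0,M\}$.

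\emph{Step 2: the separable rule.} The objective $\sum_{a_1}\rho(a_1)g(x_0,a_1)$ is linear in $\rho$. Any maximizer over the simplex must therefore put all its mass on $A_g^\star(x_0)$: if it placed positive mass on some non-maximizing action, moving that mass to a maximizer would strictly increase the objective. Since $b\notin A_g^\star(x_0)$, every $\pi_1^{\mathrm{sep}}$ gives $b$ zero mass. Every separable rule therefore has value $0$ whatever $\pi_2^{\mathrm{sep}}$ is.

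\emph{Step 3: the optimum and the gap.} We have $m^\star(x_0,b)=M$ and $m^\star(x_0,a_1)=0$ for $a_1\neq b$. By Proposition~\ref{prop:dynamic}, the optimal value is
\[
\sup_{\pi_1}\sum_{a_1}\pi_1(a_1\mid x_0)m^\star(x_0,a_1)=M.
\]
This supremum is attained only by $\pi_1(b\mid x_0)=1$, because any mass placed elsewhere loses $M$ times that mass. This shows that every value-optimal policy selects $b$, and the gap is $M-0=M$, which is the largest possible since rewards lie in $[0,M]$.

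The step that needs the most care is the quantifier in ``every rule in the family''. The separable argmax may be non-unique, and $\pi_2^{\mathrm{sep}}$ depends on an arbitrary $h$. The construction above handles both points: Step~2 covers any tie-breaking over $A_g^\star(x_0)$, and the constant reward under $b$ removes any dependence on $h$. If the statement is instead read as allowing a general context distribution, we would put a point mass at $x_0$; alternatively, we would use the same construction at every context where $g$ has a non-maximizer, and this does not change the argument.
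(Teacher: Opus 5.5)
Your proof is correct and follows the paper's construction. Both use a point mass at $x_0$, reward $M$ under $b$ and $0$ under every other first-stage action, every separable $\pi_1^{\mathrm{sep}}$ confined to $A_g^\star(x_0)$, and Proposition~\ref{prop:dynamic} for the optimum; you justify the confinement more carefully than the paper does, via linearity over the simplex.

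The only real difference is how the arbitrary stage-2 score $h$ is neutralized. The paper uses singleton feasible sets $S(x_0,a)=\{u\}$ for $a\in A_g^\star(x_0)$ and $S(x_0,b)=\{v\}$, so stage 2 is forced. You instead make $q$ constant on $S(x_0,b)$, which is equally valid and does not need two distinct items, but does not exercise the feasibility-restriction mechanism the theorem is meant to illustrate.
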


Theorem~\ref{thm:stagewise} upgrades the earlier pointwise counterexample into a class-level statement about continuation-blind separable rules. It does not claim that every conceivable end-to-end surrogate must fail. It says something narrower and more useful for this paper: once stage~1 controls downstream feasibility, any family whose first-stage Bayes decision is driven by a proxy ordering rather than continuation value can be made arbitrarily wrong at the population level.

\section{\CASP: Coupled Action-Set Pessimism}

\CASP{} is a support-aware selector for a finite library of two-stage recommender policies. The method starts from a simple operational concern: a high offline value estimate is not enough if the candidate policy places substantial mass on generator--item pairs that the logged system rarely, or never, exposed. The support burden defined below measures this reliance on weak logged support.

\begin{definition}[Support burden]
For any target policy $\pi$ and feasible triple $(x,a_1,a_2)$, define the coupled importance ratio
\[
w_\pi(x,a_1,a_2)=
\frac{\pi_1(a_1\mid x)\pi_2(a_2\mid x,a_1)}
     {\mu_1(a_1\mid x)\mu_2(a_2\mid x,a_1)}.
\]
Define the conditional burden
\[
B_\mu(\pi;x)=
\sum_{a_1\in\cA_1}\sum_{a_2\in S(x,a_1)}
\frac{\pi_1(a_1\mid x)^2\pi_2(a_2\mid x,a_1)^2}
     {\mu_1(a_1\mid x)\mu_2(a_2\mid x,a_1)},
\]
and the global support burden $B_\mu(\pi)=\E[B_\mu(\pi;X)]$.
\end{definition}

The burden is not an arbitrary regularizer. It is the second moment of the coupled importance ratio induced by the two-stage logging process.

\begin{proposition}[Second-moment interpretation]
\label{prop:secondmoment}
Under Assumptions~\ref{ass:finite} and~\ref{ass:overlap},
\begin{align*}
\E[w_\pi(X,A_1,A_2)^2 \mid X] &= B_\mu(\pi;X), \\
\E[w_\pi(X,A_1,A_2)^2] &= B_\mu(\pi).
\end{align*}
Consequently, if
\[
\widehat V_{\IPS}(\pi)=\frac{1}{n}\sum_{i=1}^n w_\pi(X_i,A_{1i},A_{2i})Y_i,
\]
then, under $0\le Y\le M$,
\[
\operatorname{Var}\{\widehat V_{\IPS}(\pi)\}
\le
\frac{M^2}{n}B_\mu(\pi).
\]
\end{proposition}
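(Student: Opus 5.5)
The plan is a direct conditional-expectation computation followed by a variance bound for i.i.d.\ averages.

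\emph{Step 1: the conditional second moment.} Condition on $X=x$. Under the logging policy, $(A_1,A_2)$ takes the value $(a_1,a_2)$ with probability $\mu_1(a_1\mid x)\mu_2(a_2\mid x,a_1)$, and only feasible pairs with $a_2\in S(x,a_1)$ occur. Assumption~\ref{ass:finite} makes the action spaces finite, so the conditional expectation is a finite sum:
\[
\E[w_\pi^2\mid X=x]=\sum_{a_1}\sum_{a_2\in S(x,a_1)}\mu_1(a_1\mid x)\mu_2(a_2\mid x,a_1)\,\frac{\pi_1(a_1\mid x)^2\pi_2(a_2\mid x,a_1)^2}{\mu_1(a_1\mid x)^2\mu_2(a_2\mid x,a_1)^2}.
\]
One factor of $\mu_1\mu_2$ cancels, and what remains is exactly $B_\mu(\pi;x)$.

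The only subtle point is the pairs where $\mu_1\mu_2=0$. There the ratio $w_\pi$ is formally undefined. By Assumption~\ref{ass:overlap}, however, $\pi_1\pi_2=0$ on any such pair. Adopting the convention $0/0=0$ in both $w_\pi$ and $B_\mu$, these pairs contribute zero to both sides. They also have logging probability zero, so they never affect the expectation. Taking the outer expectation over $X$ by the tower property gives $\E[w_\pi^2]=B_\mu(\pi)$.

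\emph{Step 2: the variance bound.} The summands $Z_i=w_\pi(X_i,A_{1i},A_{2i})Y_i$ are i.i.d., so
\[
\operatorname{Var}\{\widehat V_{\IPS}(\pi)\}=\frac{1}{n}\operatorname{Var}(Z_1)\le\frac{1}{n}\E[Z_1^2].
\]
Since $w_\pi\ge 0$ and $0\le Y\le M$ almost surely, we have $Z_1^2\le M^2 w_\pi^2$. Step 1 then gives
\[
\E[Z_1^2]\le M^2\,\E[w_\pi^2]=M^2 B_\mu(\pi).
\]

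Finiteness of $B_\mu(\pi)$ is not needed for the inequality, since it holds trivially if the right side is infinite. Under the quantitative lower bound $\nu$ in Assumption~\ref{ass:overlap}, one also gets $B_\mu(\pi)\le 1/\nu^2$, so the bound is finite. No step is a real obstacle; the only care needed is the zero-propensity convention in Step 1.
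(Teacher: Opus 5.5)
Your proposal is correct and follows the paper's proof essentially step for step: expand $\E[w_\pi^2\mid X=x]$ over the logging distribution and cancel one factor of $\mu_1\mu_2$; apply the tower property; then bound $\operatorname{Var}\{\widehat V_{\IPS}(\pi)\}\le \frac{1}{n}\E[Z_1^2]\le \frac{M^2}{n}\E[w_\pi^2]$. Your explicit treatment of zero-propensity pairs via Assumption~\ref{ass:overlap} and the $0/0=0$ convention is a small piece of extra care that the paper leaves implicit.
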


Proposition~\ref{prop:secondmoment} is the main reason the penalty is tied to $B_\mu(\pi)$. In a single-stage bandit, this quantity reduces to the usual second moment of the importance weight. In the present two-stage recommender, the same object has a stronger architectural meaning: it measures how the first-stage generator choice changes the downstream support on which the second-stage ranker is evaluated.

To estimate policy value, we use a doubly robust score. Let
\[
m_r(x;\pi)=
\sum_{a_1\in\cA_1}\pi_1(a_1\mid x)
\sum_{a_2\in S(x,a_1)}\pi_2(a_2\mid x,a_1)r(x,a_1,a_2)
\]
for any regression function $r$. Given an estimated reward model $\hat q$, define the score
\begin{equation}
\label{eq:drscore}
\psi_\pi(O;\hat q)=
m_{\hat q}(X;\pi)
+
\frac{\pi_1(A_1\mid X)\pi_2(A_2\mid X,A_1)}
     {\mu_1(A_1\mid X)\mu_2(A_2\mid X,A_1)}
\bigl(Y-\hat q(X,A_1,A_2)\bigr),
\end{equation}
and the estimator
\[
\widehat V_{\DR}(\pi)=\frac{1}{n}\sum_{i=1}^n \psi_\pi(O_i;\hat q_i),
\]
where each $\hat q_i$ is evaluated out of sample. The empirical burden is
\[
\widehat B_\mu(\pi)=\frac{1}{n}\sum_{i=1}^n B_\mu(\pi;X_i).
\]
The \CASP{} score is
\begin{equation}
\label{eq:sampleobjective}
\widehat J_\lambda(\pi)=\widehat V_{\DR}(\pi)-\lambda \widehat B_\mu(\pi),
\end{equation}
and the selected policy is
\[
\widehat\pi_\lambda
\in
\argmax_{\pi\in\Pi}\widehat J_\lambda(\pi).
\]

\begin{figure}[tbp]
\centering
\fbox{%
\begin{minipage}{0.94\columnwidth}
\textbf{Algorithm 1: \CASP{} for support-aware two-stage policy selection}

\smallskip
\textbf{Input:} logged data $\{(X_i,A_{1i},A_{2i},Y_i)\}_{i=1}^n$, feasible-set map $S(x,a_1)$, finite policy library $\Pi$, reward fits $\hat q_i$, behavior or reconstructed propensities $\hat e_i$, and penalty grid $\Lambda$.

\smallskip
\begin{enumerate}[leftmargin=1.4em,itemsep=0.2em]
    \item For each policy $\pi\in\Pi$, compute the out-of-sample doubly robust value estimate $\widehat V_{\DR,\hat e}(\pi)$.
    \item For each policy $\pi\in\Pi$, compute the support burden $\widehat B_{\hat e}(\pi)$ over the induced feasible sets.
    \item For each $\lambda\in\Lambda$, compute
    \[
    \widehat J_\lambda(\pi)=
    \widehat V_{\DR,\hat e}(\pi)-\lambda \widehat B_{\hat e}(\pi).
    \]
    \item Select
    \[
    \widehat\pi_\lambda\in\argmax_{\pi\in\Pi}\widehat J_\lambda(\pi).
    \]
    \item Report the selected policy together with $\widehat V_{\DR,\hat e}$, $\widehat B_{\hat e}$, effective sample size, maximum observed weight, off-support mass, and generator-share changes.
\end{enumerate}
\end{minipage}}
\caption{\CASP{} as a deployment-facing decision layer. The method does not only rank policies by value; it returns a policy together with support diagnostics that indicate whether the offline value estimate is credible enough to act on.}
\label{fig:casp-algorithm}
\end{figure}

\paragraph{Choosing the penalty.}
The penalty $\lambda$ represents the decision maker's tolerance for support risk. In the experiments we report a fixed headline value for comparability, but the recommended deployment use is to inspect the full value--burden path over $\lambda\in\Lambda$ and choose the smallest positive penalty that removes policies with unacceptable support burden or off-support mass. Equivalently, if a platform has a burden threshold $\mathcal{B}_{\max}$, one may select
\[
\widehat\pi
\in
\argmax_{\pi\in\Pi:\,\widehat B_{\hat e}(\pi)\le \mathcal{B}_{\max}}
\widehat V_{\DR,\hat e}(\pi).
\]
The penalized and constrained views serve the same purpose: they make support credibility part of the selection rule rather than an informal post-hoc warning.

\section{Theoretical Guarantees}

This section gives scoped guarantees for the finite policy-library setting studied in the paper. The theory is not meant to establish a universal surrogate-consistency result for all two-stage recommender policies. Instead, it supports three claims needed for deployment-facing selection. First, Proposition~\ref{prop:dynamic} and Theorem~\ref{thm:stagewise} show why stagewise proxy optimization can fail when stage~1 controls downstream feasibility. Second, Proposition~\ref{prop:secondmoment} shows that the burden penalty is the second moment of the coupled importance ratio and therefore measures offline stability. Third, the finite-class results below show how conservative selection behaves when the candidate policy library is fixed in advance.

The first result is a population calibration bound for the penalized objective. It should be read as a sanity guarantee: if the value-optimal policy is already well supported, then using a small support penalty cannot create large value regret at the population level.

\begin{theorem}[Population guarantee for \CASP{}]
\label{thm:popregret}
Let
\begin{align*}
J_\lambda(\pi) &= V(\pi)-\lambda B_\mu(\pi), \\
\pi^\star &\in\argmax_{\pi\in\Pi}V(\pi), \\
\pi_\lambda &\in\argmax_{\pi\in\Pi}J_\lambda(\pi).
\end{align*}
Under Assumptions~\ref{ass:finite}--\ref{ass:overlap},
\[
V(\pi^\star)-V(\pi_\lambda)\le \lambda B_\mu(\pi^\star).
\]
\end{theorem}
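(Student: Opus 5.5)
The plan is the standard comparison argument for penalized maximization, using only the optimality of $\pi_\lambda$ for $J_\lambda$ and the nonnegativity of the burden. I will first check that all quantities are well defined and finite. Under Assumption~\ref{ass:finite} the action spaces are finite and rewards are bounded, so $V(\pi)\in[0,M]$ for every feasible $\pi$. Under Assumption~\ref{ass:overlap}, the denominators $\mu_1\mu_2$ are positive wherever $\pi_1\pi_2>0$. Hence $B_\mu(\pi;x)$ is a finite sum of nonnegative terms. When the quantitative lower bound $\nu$ is invoked, $B_\mu(\pi)\le 1/\nu$. In any case $B_\mu(\pi)\ge 0$; this sign fact is the only structural property of the burden the argument uses. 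Because $\Pi$ is a finite library, both argmax sets are nonempty, so $\pi^\star$ and $\pi_\lambda$ exist.

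The key step is to compare $J_\lambda$ at the two policies. Since $\pi_\lambda$ maximizes $J_\lambda$ over $\Pi$ and $\pi^\star\in\Pi$, we have $J_\lambda(\pi_\lambda)\ge J_\lambda(\pi^\star)$, that is,
\[
V(\pi_\lambda)-\lambda B_\mu(\pi_\lambda)\ \ge\ V(\pi^\star)-\lambda B_\mu(\pi^\star).
\]
Rearranging gives
\[
V(\pi^\star)-V(\pi_\lambda)\le \lambda B_\mu(\pi^\star)-\lambda B_\mu(\pi_\lambda).
\]
Since $\lambda\ge 0$ (implicitly required for a penalty) and $B_\mu(\pi_\lambda)\ge 0$, the subtracted term can be dropped, which yields the claim.

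I will also record the sharper intermediate bound $\lambda\{B_\mu(\pi^\star)-B_\mu(\pi_\lambda)\}$. It shows that the regret is small whenever the selected policy carries comparable burden. The bound is also consistent with the trivial lower bound $V(\pi^\star)-V(\pi_\lambda)\ge 0$, which holds by optimality of $\pi^\star$.

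There is no real obstacle here. The only points needing care are, first, making explicit that $\lambda\ge 0$, and second, confirming via Assumption~\ref{ass:overlap} that $B_\mu(\pi^\star)$ is finite, so the bound is meaningful rather than vacuous. Proposition~\ref{prop:secondmoment} can be cited to interpret $B_\mu(\pi^\star)$ as the second moment $\E[w_{\pi^\star}^2]$, which makes the nonnegativity immediate.
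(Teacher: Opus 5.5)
Your argument is correct and is exactly the paper's proof: you use $J_\lambda(\pi_\lambda)\ge J_\lambda(\pi^\star)$, rearrange to $V(\pi^\star)-V(\pi_\lambda)\le\lambda\{B_\mu(\pi^\star)-B_\mu(\pi_\lambda)\}$, and drop the nonnegative term $\lambda B_\mu(\pi_\lambda)$; your explicit note that $\lambda\ge 0$ is needed is a useful clarification the paper leaves implicit. One side remark, which the argument does not use: the paper reads the quantitative overlap assumption as per-stage bounds $\mu_1,\mu_2\ge\nu$ (see the proof of Theorem~\ref{thm:finite}), and under that reading the burden envelope is $\nu^{-2}$ rather than $1/\nu$.
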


The bound is intentionally simple. Its role is not to claim that \CASP{} dominates value-only selection, but to formalize the price of conservatism. If the best value policy has small burden, then the population cost of penalizing burden is small. If the best value policy has large burden, the bound allows a larger gap, which is exactly the regime where a deployment decision maker may prefer a lower-value but more credible policy.

The second guarantee concerns finite model selection. Let $\Pi$ be a finite candidate class and assume the logged propensities are known and bounded away from zero on the relevant support.

\begin{theorem}[Finite-class guarantee for \CASP{} selection]
\label{thm:finite}
Suppose $\Pi$ is finite, $\widehat V_{\DR}$ is computed with out-of-sample reward fits as defined above, and the quantitative version of Assumption~\ref{ass:overlap} holds with lower bound $\nu$. Then for any $\alpha\in(0,1)$, with probability at least $1-\alpha$,
\begin{align*}
\sup_{\pi\in\Pi}\bigl|\widehat V_{\DR}(\pi)-V(\pi)\bigr|
&\le M(1+\nu^{-2})\sqrt{\frac{\log(4|\Pi|/\alpha)}{2n}}, \\
\sup_{\pi\in\Pi}\bigl|\widehat B_\mu(\pi)-B_\mu(\pi)\bigr|
&\le \nu^{-2}\sqrt{\frac{\log(4|\Pi|/\alpha)}{2n}}.
\end{align*}
Consequently, the pessimistic selector
\[
\widehat \pi_\lambda \in \argmax_{\pi\in\Pi}\widehat J_\lambda(\pi)
\]
satisfies
\begin{align*}
V(\pi^\star)-V(\widehat \pi_\lambda)
&\le
\lambda B_\mu(\pi^\star) \\
&\quad + 2M(1+\nu^{-2})\sqrt{\frac{\log(4|\Pi|/\alpha)}{2n}} \\
&\quad + 2\lambda \nu^{-2}\sqrt{\frac{\log(4|\Pi|/\alpha)}{2n}}.
\end{align*}
with the same probability.
\end{theorem}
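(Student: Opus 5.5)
The plan is to prove the two uniform deviation bounds with Hoeffding's inequality and a union bound over $\Pi$. The regret bound then follows by the standard comparison chain, with Theorem~\ref{thm:popregret}'s argument as the template.

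\textbf{Value deviation.} For each fixed $\pi\in\Pi$, I would first check that the score $\psi_\pi(O_i;\hat q_i)$ is bounded in an interval of length $M(1+\nu^{-2})$. Truncate the reward fits to $[0,M]$; this is harmless and I will assume it. Then $m_{\hat q}(X;\pi)\in[0,M]$. The coupled weight satisfies $0\le w_\pi\le \nu^{-2}$, since each of $\mu_1,\mu_2$ is at least $\nu$ on the support. The residual $Y-\hat q$ lies in $[-M,M]$.

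As written, this gives a range of length $M+2M\nu^{-2}$, which is slightly off from the stated constant. I would either track the range more carefully or absorb the discrepancy by reading $\nu$ as a bound on the product propensity $\mu_1\mu_2$. Concretely, I would note that $m_{\hat q}$ and $w\hat q$ partially offset, so the sum lies in $[-M\nu^{-2},\,M(1+\nu^{-2})]$. I expect the constant bookkeeping here to be the main obstacle, and I would settle it by stating the range explicitly.

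Out-of-sample (cross-fitted) reward fits make the summands conditionally independent given the fits. They are also conditionally unbiased for $V(\pi)$: the correction term has conditional mean $m_q-m_{\hat q}$ by Assumptions~\ref{ass:ignorability}--\ref{ass:overlap}, so $\E[\psi_\pi\mid \hat q]=V(\pi)$. Hoeffding's inequality at level $\alpha/(2|\Pi|)$, two-sided, then gives the stated $\sqrt{\log(4|\Pi|/\alpha)/(2n)}$ rate for each $\pi$.

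\textbf{Burden deviation.} The terms $B_\mu(\pi;X_i)$ are i.i.d. with mean $B_\mu(\pi)$ by definition. They lie in $[0,\nu^{-2}]$, because
$$B_\mu(\pi;x)\le \nu^{-2}\sum_{a_1}\sum_{a_2}\pi_1^2\pi_2^2\le\nu^{-2}.$$
Hoeffding's inequality at level $\alpha/(2|\Pi|)$ gives the second bound. A union bound over $\Pi$ and the two families then puts both bounds on one event of probability at least $1-\alpha$.

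\textbf{Regret.} On that event, write $\epsilon_V$ and $\epsilon_B$ for the two radii, so $|\widehat J_\lambda-J_\lambda|\le\epsilon_V+\lambda\epsilon_B$ uniformly over $\Pi$. Then
$$V(\widehat\pi_\lambda)\ge J_\lambda(\widehat\pi_\lambda)\ge \widehat J_\lambda(\widehat\pi_\lambda)-\epsilon_V-\lambda\epsilon_B\ge \widehat J_\lambda(\pi^\star)-\epsilon_V-\lambda\epsilon_B\ge J_\lambda(\pi^\star)-2\epsilon_V-2\lambda\epsilon_B.$$
The first inequality uses $B_\mu\ge0$ and the third uses that $\widehat\pi_\lambda$ maximizes $\widehat J_\lambda$. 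Since $J_\lambda(\pi^\star)=V(\pi^\star)-\lambda B_\mu(\pi^\star)$, rearranging gives the claimed regret bound.
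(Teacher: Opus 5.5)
Your architecture matches the paper's proof. It uses conditional unbiasedness of the DR score, Hoeffding plus a union bound at total level $\alpha/2$ for each of the value and burden families, the burden envelope $B_\mu(\pi;X)\in[0,\nu^{-2}]$, and then the selection reduction. Your chain through $\pi^\star$ is a slightly more direct version of Lemma~\ref{lem:oracleappendix}, which routes through $\pi_\lambda$. The burden bound and the regret step are fine.

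The step that fails is the one you flagged, and your proposed resolution does not resolve it. Hoeffding's bound depends only on the length $b-a$ of an almost-sure interval $[a,b]$ containing the summands. The interval you write down, $[-M\nu^{-2},\,M(1+\nu^{-2})]$, has length $M(1+2\nu^{-2})$, which is the same naive range you started from, so nothing is gained. Tracking the offset between $m_{\hat q}$ and $w_\pi\hat q$ properly gives $\psi_\pi\in[-M(\nu^{-2}-1),\,M\nu^{-2}]$, of length $M(2\nu^{-2}-1)$. That length is attained:
\begin{itemize}
\item take two contexts of positive probability and a target policy deterministic on a pair with $\mu_1=\mu_2=\nu$;
\item use the fixed fit $\hat q\equiv 0$ with $Y=M$ at the first context;
\item use $\hat q\equiv M$ with $Y=0$ at the second.
\end{itemize}
Both endpoints then occur with positive probability. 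Since $2\nu^{-2}-1>1+\nu^{-2}$ once $\nu^2<1/2$, no range bookkeeping under the intended hypothesis $\mu_1,\mu_2\ge\nu$ yields $M(1+\nu^{-2})$ through Hoeffding. Reading $\nu$ as a lower bound on $\mu_1\mu_2$ would make your argument work (for $\nu\le 1/2$), but that is a strictly stronger hypothesis. The $\nu^{-2}$ burden envelope in the statement and the convention $\hat e_i\ge\hat\nu^2$ in Corollary~\ref{cor:estimated} show it is not the intended reading.

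This is not a sharper idea you are missing. The paper's proof reaches $M(1+\nu^{-2})$ only by bounding $|\psi_\pi|\le C_V=M(1+\nu^{-2})$ and then putting $C_V^2$ in Hoeffding's exponent as though $C_V$ were the length of the range. An envelope $|\psi_\pi|\le C_V$ only gives a range of length $2C_V$, so that step is a slip. What your argument actually proves is the theorem with $M(1+2\nu^{-2})$ in place of $M(1+\nu^{-2})$, both in the first display and in the regret bound. This is the same constant the paper itself uses in Corollary~\ref{cor:estimated}, and you should state the result in that form rather than assert the original constant.

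A smaller point: your parenthetical ``cross-fitted'' overreaches. Under $K$-fold cross-fitting the summands are not conditionally independent given all the fits, since each fit is a function of the other folds' data. The clean version uses a single held-out fit, as in the train--selection split of the experiments. Alternatively, apply Hoeffding fold by fold, at the cost of $K$-dependent constants. The paper's proof is silent on this as well.
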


Theorem~\ref{thm:finite} is stated for the finite policy library used throughout the experiments. Although it does not address sharper infinite-class results with shared cross-fitted nuisances, it gives a direct conservative model-selection guarantee for the setting studied here.

Let
\[
e(x,a_1,a_2)=\mu_1(a_1\mid x)\mu_2(a_2\mid x,a_1),
\]
let $\hat e_i$ denote an out-of-sample or reconstructed estimate of the same joint propensity, and let $\widehat V_{\DR,\hat e}$ and $\widehat B_{\hat e}$ denote the empirical value and burden estimators formed by replacing $e$ with the observation-specific denominator $\hat e_i$. For fold- or observation-specific propensities, define the nuisance-conditional population burden average
\begin{align*}
B_{\hat e,n}(\pi)
\;=\;&
\frac{1}{n}\sum_{i=1}^n
\E\Bigg[
\sum_{a_1\in\cA_1}\sum_{a_2\in S(X,a_1)}
\frac{\pi_1(a_1\mid X)^2\pi_2(a_2\mid X,a_1)^2}
     {\hat e_i(X,a_1,a_2)}
\;\Bigg|\;\hat e_i
\Bigg],
\end{align*}
and define
\[
\Delta_V
=
\sup_{\pi\in\Pi}
\left|
\E[\widehat V_{\DR,\hat e}(\pi)\mid \{\hat q_i,\hat e_i\}_{i=1}^n]-V(\pi)
\right|,
\]
\[
\Delta_B
=
\sup_{\pi\in\Pi}
\bigl|B_{\hat e,n}(\pi)-B_\mu(\pi)\bigr|.
\]

\begin{corollary}[Finite-class \CASP{} guarantee with reconstructed propensities]
\label{cor:estimated}
Suppose $\Pi$ is finite, each evaluation observation is scored with out-of-sample $\hat q_i$ and $\hat e_i$, $0\le \hat q_i\le M$, and $\hat e_i\ge \hat\nu^2$ on the relevant support. Let
\[
\widehat\pi_{\lambda,\hat e}
\in
\argmax_{\pi\in\Pi}
\left\{
\widehat V_{\DR,\hat e}(\pi)-\lambda\widehat B_{\hat e}(\pi)
\right\}.
\]
Then, conditional on the nuisance fits, with probability at least $1-\alpha$,
\[
V(\pi^\star)-V(\widehat\pi_{\lambda,\hat e})
\le
\lambda B_\mu(\pi^\star)
+ 2\{r_V(n,\alpha)+\Delta_V\}
+ 2\lambda\{r_B(n,\alpha)+\Delta_B\},
\]
where
\begin{align*}
r_V(n,\alpha) &= M(1+2\hat\nu^{-2})\sqrt{\frac{\log(4|\Pi|/\alpha)}{2n}}, \\
r_B(n,\alpha) &= \hat\nu^{-2}\sqrt{\frac{\log(4|\Pi|/\alpha)}{2n}}.
\end{align*}
\end{corollary}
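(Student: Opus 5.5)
The argument runs the same as in Theorem~\ref{thm:finite}, with one change: the uniform deviation bounds now compare each empirical quantity to its conditional mean given the nuisance fits, and the biases $\Delta_V$ and $\Delta_B$ connect those conditional means to $V$ and $B_\mu$. Condition throughout on $\{\hat q_i,\hat e_i\}_{i=1}^n$. Because each observation is scored out of sample, the summands $\psi_\pi(O_i;\hat q_i,\hat e_i)$ are conditionally independent across $i$, though not identically distributed, and so are the burden summands $\sum_{a_1,a_2}\pi_1^2\pi_2^2/\hat e_i(X_i,a_1,a_2)$. Hoeffding's inequality for independent bounded variables therefore applies to $\widehat V_{\DR,\hat e}(\pi)$ around $\E[\widehat V_{\DR,\hat e}(\pi)\mid\{\hat q_i,\hat e_i\}]$ and to $\widehat B_{\hat e}(\pi)$ around $B_{\hat e,n}(\pi)$.

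The range bounds come next. The direct term $m_{\hat q_i}(X;\pi)$ lies in $[0,M]$ because $0\le\hat q_i\le M$. The correction term is $\pi_1\pi_2/\hat e_i\cdot(Y-\hat q_i)$. Its weight lies in $[0,\hat\nu^{-2}]$ and $Y-\hat q_i\in[-M,M]$, so the correction lies in $[-M\hat\nu^{-2},M\hat\nu^{-2}]$. The score $\psi$ therefore lies in an interval of length at most $M(1+2\hat\nu^{-2})$, which explains the factor $2$ that is absent in Theorem~\ref{thm:finite}. The burden summand is bounded by $\sum_{a_1,a_2}\pi_1^2\pi_2^2\,\hat\nu^{-2}\le\hat\nu^{-2}$, since $\sum\pi_1^2\pi_2^2\le\sum\pi_1\pi_2=1$, so it lies in $[0,\hat\nu^{-2}]$. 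Apply two-sided Hoeffding to each of the $2|\Pi|$ empirical quantities with failure probability $\alpha/(2|\Pi|)$ each and take a union bound. With probability at least $1-\alpha$, $\sup_\pi|\widehat V_{\DR,\hat e}(\pi)-\E[\cdot\mid\cdot]|\le r_V$ and $\sup_\pi|\widehat B_{\hat e}(\pi)-B_{\hat e,n}(\pi)|\le r_B$; the $\log(4|\Pi|/\alpha)$ in both rates comes from this split. The triangle inequality then gives $\sup_\pi|\widehat V_{\DR,\hat e}-V|\le r_V+\Delta_V$ and $\sup_\pi|\widehat B_{\hat e}-B_\mu|\le r_B+\Delta_B$ on this event.

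The regret bound follows from the standard three-term decomposition. Write $\epsilon_J=(r_V+\Delta_V)+\lambda(r_B+\Delta_B)$, which bounds $\sup_\pi|\widehat J_{\lambda,\hat e}(\pi)-J_\lambda(\pi)|$. Since $\widehat\pi_{\lambda,\hat e}$ maximizes $\widehat J_{\lambda,\hat e}$,
\[
J_\lambda(\widehat\pi_{\lambda,\hat e})\ge\widehat J_{\lambda,\hat e}(\widehat\pi_{\lambda,\hat e})-\epsilon_J\ge\widehat J_{\lambda,\hat e}(\pi^\star)-\epsilon_J\ge J_\lambda(\pi^\star)-2\epsilon_J .
\]
Expanding $J_\lambda$ gives $V(\pi^\star)-V(\widehat\pi_{\lambda,\hat e})\le\lambda B_\mu(\pi^\star)-\lambda B_\mu(\widehat\pi_{\lambda,\hat e})+2\epsilon_J$, and dropping the nonpositive term $-\lambda B_\mu(\widehat\pi_{\lambda,\hat e})$ yields the stated bound. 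This is exactly the argument of Theorem~\ref{thm:popregret} with $2\epsilon_J$ added.

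The main obstacle is the conditional-independence step. If $\hat q_i$ and $\hat e_i$ are cross-fitted from other folds, the summands are not literally independent unconditionally. Conditioning on all nuisance fits fixes them, so that the only remaining randomness is the evaluation observations $O_i$. I would state the result for the sample splitting in which the nuisances are independent of the evaluation sample, or per fold with a union over folds. This is also why the bound is stated conditional on the fits and why the biases are isolated in $\Delta_V$ and $\Delta_B$ instead of being controlled.
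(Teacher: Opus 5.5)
Your proposal is correct and follows the paper's proof: condition on the fits, apply Hoeffding plus a union bound to $\widehat V_{\DR,\hat e}$ and $\widehat B_{\hat e}$ around their nuisance-conditional means (with envelopes $M(1+2\hat\nu^{-2})$ and $\hat\nu^{-2}$), pick up $\Delta_V$ and $\Delta_B$ by the triangle inequality, and finish with the uniform-selection reduction of Lemma~\ref{lem:oracleappendix}. Your version is, if anything, cleaner than the paper's on three points: reading $M(1+2\hat\nu^{-2})$ as the length of the range $[-M\hat\nu^{-2},M(1+\hat\nu^{-2})]$ is exactly what Hoeffding needs; your direct chain against $\pi^\star$ avoids the sign slip in the paper's decomposition, where the last term should be $-\lambda B_\mu(\widehat\pi_\lambda)$; and your caveat that conditional independence requires fits independent of the evaluation sample makes explicit an assumption the paper leaves implicit.
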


Corollary~\ref{cor:estimated} relaxes the known-propensity idealization in a deliberately conservative way. When $\hat e_i=e$ for all evaluation observations, the additive reconstruction terms vanish and the result has the same finite-class form as Theorem~\ref{thm:finite}, up to the looser envelope constant used for the reconstructed-propensity score. When propensities are reconstructed or estimated, the extra terms $\Delta_V$ and $\Delta_B$ separate value bias from burden miscalibration. Appendix~\ref{app:technical} makes the doubly robust product-error structure explicit. That is the form most relevant for the reconstructed MovieLens study, where logging probabilities are engineered rather than native production propensities.

The finite-class theorem is the main selection result for this paper because the empirical policy library is finite. Appendix~\ref{app:technical} briefly notes the extension to broader function classes and summarizes the relation between the formal assumptions and the empirical design.

\section{Experiments}

We evaluate four empirical questions: whether stagewise optimization can fail even in a transparent finite example; whether \CASP{} helps most when stage coupling is the dominant difficulty; how the method behaves when support becomes weak or the action space grows; and whether its support-aware objective induces a cleaner deployment-facing frontier than value-only or generic conservative selectors.

\subsection{Experimental Setup}

All results in this section use a shared simulator, nuisance-fit stack, policy library, and train-selection split. The simulation study contains five blocks: a minimal counterexample, a coupling sweep, a support-deficiency stress test, a large-action structured stress test, and a sample-size sweep. For the headline comparisons we fix \CASP{} at $\lambda=0.05$ and the generic conservative baseline at $\beta=0.50$. Appendix~\ref{app:technical} reports the full sweep grids, replication settings, and additional robustness tables.

\paragraph{Theory-to-experiment map.}
Block~1 illustrates the structural failure identified by Theorem~\ref{thm:stagewise}. Block~2 studies a coupling-dominant regime in which stage~1 mainly matters through downstream continuation value. Blocks~3 and~4 examine weaker support and larger structured action spaces, where the value--burden tradeoff becomes more pronounced. Block~5 varies sample size to assess finite-sample selection behavior. The MovieLens application examines the same value--burden logic under reconstructed logging probabilities, temporal holdout, and support diagnostics.

\subsection{Competitors and Metrics}

Each block compares the same policy families: a stagewise proxy learner, a plug-in reward baseline, a DR-value-only selector, a generic DR-LCB selector, the proposed \CASP{} selector, a two-stage OPL comparator following \citet{ma2020twostage}, and a downstream-aware candidate-generator comparator following \citet{wang2025candidategen}, with oracle included for interpretation where useful. All methods are evaluated using the same logged samples, nuisance-fit pipeline, policy library, train-selection split, and evaluation protocol. This common interface makes the comparison implementation-symmetric: differences in the reported results come from the selection rules rather than from different data splits, fitted nuisance models, or evaluation samples.

We report true policy value, oracle regret, support burden, and selected-policy mode frequency so that empirical conclusions are tied to both value and evaluability.

\subsection{Results}

\paragraph{Block 1: stagewise failure is real.}
\begin{table}[!htbp]
\centering
\small
\begin{tabular}{@{}lrrr@{}}
\toprule
Method & Value & Regret & Burden \\
\midrule
Stagewise & 0.2083 & 0.6417 & 2.00 \\
DR value only & 0.8500 & 0.0000 & 2.00 \\
CASP 0.05 & 0.8500 & 0.0000 & 2.00 \\
Ma-style OPL & 0.8500 & 0.0000 & 2.00 \\
Wang-style generator & 0.8500 & 0.0000 & 2.00 \\
Oracle & 0.8500 & 0.0000 & 2.00 \\
\bottomrule
\end{tabular}
\caption{Block 1 counterexample. The stagewise proxy selects the wrong generator, while every end-to-end learner recovers the optimum.}
\label{tab:block1-counterexample}
\end{table}

The counterexample gives the expected separation after the external baselines are added. Every end-to-end learner recovers the optimal policy, while the stagewise proxy selects the wrong generator. This is the empirical analogue of Theorem~\ref{thm:stagewise}: the failure mode is structural, not an artifact of comparing \CASP{} only against internal baselines.

\paragraph{Block 2: \CASP{} is strongest when coupling dominates.}
\begin{figure*}[t]
\centering
\begin{tikzpicture}
\begin{groupplot}[
    caspgroup,
    group style={group size=2 by 1, horizontal sep=1.55cm},
    width=0.44\textwidth,
    height=0.31\textwidth,
    xlabel={Coupling strength},
]
\nextgroupplot[
    ylabel={True policy value},
    legend to name=couplinglegend,
    legend columns=5,
    legend style={font=\scriptsize, draw=none, fill=none, /tikz/every even column/.append style={column sep=0.6em}},
]
\addplot[black, dashed, thick, mark=o, mark size=2pt] table[x=sweep_value,y=stagewise_value,col sep=comma]{figures/phase3_external_baselines/block2_coupling_key.csv};
\addlegendentry{Stagewise}
\addplot[orange!90!black, thick, mark=square*, mark size=2pt] table[x=sweep_value,y=dr_value_only_value,col sep=comma]{figures/phase3_external_baselines/block2_coupling_key.csv};
\addlegendentry{DR value only}
\addplot[green!50!black, thick, mark=triangle*, mark size=2pt] table[x=sweep_value,y=dr_lcb_value,col sep=comma]{figures/phase3_external_baselines/block2_coupling_key.csv};
\addlegendentry{DR-LCB 0.50}
\addplot[blue!70!black, very thick, mark=*, mark size=2.4pt] table[x=sweep_value,y=casp_value,col sep=comma]{figures/phase3_external_baselines/block2_coupling_key.csv};
\addlegendentry{CASP 0.05}
\addplot[red!75!black, thick, mark=diamond*, mark size=2.2pt] table[x=sweep_value,y=ma_style_value,col sep=comma]{figures/phase3_external_baselines/block2_coupling_key.csv};
\addlegendentry{Ma et al. OPL}
\nextgroupplot[
    ylabel={Support burden},
]
\addplot[black, dashed, thick, mark=o, mark size=2pt] table[x=sweep_value,y=stagewise_burden,col sep=comma]{figures/phase3_external_baselines/block2_coupling_key.csv};
\addplot[orange!90!black, thick, mark=square*, mark size=2pt] table[x=sweep_value,y=dr_value_only_burden,col sep=comma]{figures/phase3_external_baselines/block2_coupling_key.csv};
\addplot[green!50!black, thick, mark=triangle*, mark size=2pt] table[x=sweep_value,y=dr_lcb_burden,col sep=comma]{figures/phase3_external_baselines/block2_coupling_key.csv};
\addplot[blue!70!black, very thick, mark=*, mark size=2.4pt] table[x=sweep_value,y=casp_burden,col sep=comma]{figures/phase3_external_baselines/block2_coupling_key.csv};
\addplot[red!75!black, thick, mark=diamond*, mark size=2.2pt] table[x=sweep_value,y=ma_style_burden,col sep=comma]{figures/phase3_external_baselines/block2_coupling_key.csv};
\end{groupplot}
\path (group c1r1.south west) -- (group c2r1.south east)
node[midway,below=12pt] {\pgfplotslegendfromname{couplinglegend}};
\end{tikzpicture}
\caption{Block 2 coupling sweep. Left: true policy value. Right: support burden. As coupling strengthens, \CASP{} remains among the strongest learned selectors while maintaining lower burden than the value-only, generic conservative, and \citet{ma2020twostage} alternatives.}
\label{fig:coupling-headline}
\Description{Two-panel line plot for the coupling sweep. The left panel shows true policy value versus coupling strength for Stagewise, DR value only, DR-LCB 0.50, CASP 0.05, and \citet{ma2020twostage} OPL. CASP tracks the top learned value curve while Stagewise remains below the coupled methods. The right panel shows support burden versus coupling strength for the same methods; CASP stays below the value-only, DR-LCB, and \citet{ma2020twostage} comparators across the sweep.}
\end{figure*}
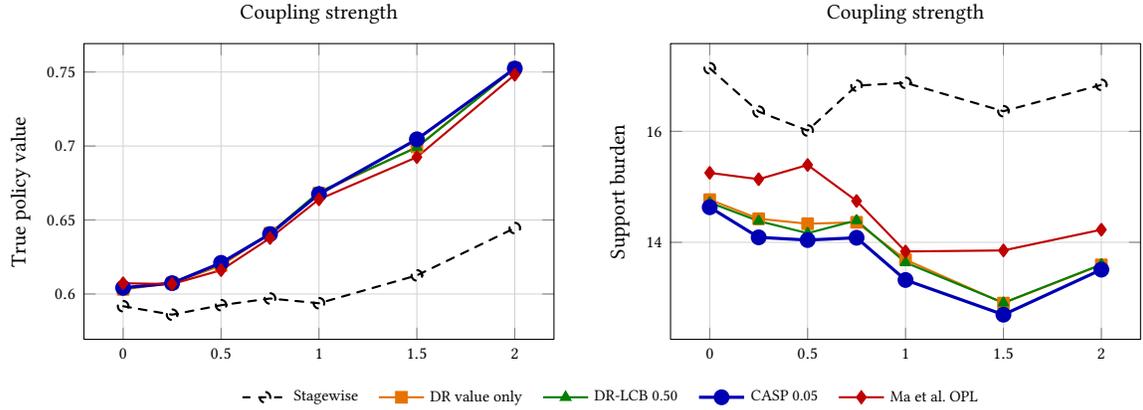

Figure~\ref{fig:coupling-headline} reports the coupling sweep. As coupling strengthens, the stagewise proxy falls behind the coupled learners. \CASP{} remains close to the top learned value curve while maintaining lower support burden than the DR-value-only, DR-LCB, and \citet{ma2020twostage} comparators. The \citet{wang2025candidategen} downstream-aware generator baseline is omitted from the figure for readability and is reported in the appendix block-level results.

\paragraph{Blocks 3--5: the harder regimes become a value--burden frontier.}
\begin{table}[!htbp]
\centering
\small
\begin{tabular}{@{}lrrr@{}}
\toprule
Method & Avg.\ value & Avg.\ regret & Avg.\ burden \\
\midrule
Stagewise & 0.6771 & 0.1524 & 38.90 \\
Plug-in & 0.7122 & 0.1173 & 44.98 \\
DR value only & 0.7176 & 0.1119 & 31.27 \\
DR-LCB 0.50 & 0.7175 & 0.1120 & 31.01 \\
CASP 0.05 & 0.7147 & 0.1148 & \textbf{26.77} \\
Ma-style OPL & \textbf{0.7216} & \textbf{0.1079} & 38.73 \\
Wang-style generator & 0.6960 & 0.1335 & 43.50 \\
\bottomrule
\end{tabular}
\caption{Cross-block averages over Blocks 2--5. Ma-style OPL is the strongest raw-value external comparator on average, while \CASP{} is the lowest-burden learned selector.}
\label{tab:crossblock-summary}
\end{table}

The remaining blocks are intentionally more mixed. In the support-stress, large-action, and sample-size sweeps, \citet{ma2020twostage} OPL is the strongest raw-value competitor on average. \CASP{} is therefore not a uniform raw-value winner. Its empirical advantage is different: it is the lowest-burden learned selector across the harder blocks, often by a substantial margin over both DR-only and \citet{ma2020twostage} alternatives. That makes the right claim deployment-facing rather than triumphalist.

\begin{figure*}[t]
\centering
\begin{tikzpicture}
\begin{groupplot}[
    caspgroup,
    group style={group size=2 by 1, horizontal sep=1.55cm},
    width=0.44\textwidth,
    height=0.31\textwidth,
    xlabel={Average support burden},
    every axis plot/.append style={only marks, mark size=2.6pt},
]
\nextgroupplot[
    ylabel={Average true policy value},
    legend to name=frontierlegend,
    legend columns=4,
    legend style={font=\scriptsize, draw=none, fill=none, /tikz/every even column/.append style={column sep=0.75em}},
]
\addplot[blue!70!black, mark=*,
    nodes near coords={\pgfplotspointmeta},
    point meta=explicit symbolic,
    every node near coord/.append style={font=\scriptsize, anchor=south west, xshift=2pt, yshift=1pt, text=blue!70!black, fill=white, fill opacity=0.75, text opacity=1, inner sep=1.2pt}
] table[x=casp_burden,y=casp_value,meta=block_label,col sep=comma]{figures/phase3_external_baselines/frontier_key.csv};
\addlegendentry{CASP 0.05}
\addplot[orange!90!black, mark=square*,
    nodes near coords={\pgfplotspointmeta},
    point meta=explicit symbolic,
    every node near coord/.append style={font=\scriptsize, anchor=north east, xshift=-2pt, yshift=-1pt, text=orange!90!black, fill=white, fill opacity=0.75, text opacity=1, inner sep=1.2pt}
] table[x=dr_value_only_burden,y=dr_value_only_value,meta=block_label,col sep=comma]{figures/phase3_external_baselines/frontier_key.csv};
\addlegendentry{DR value only}
\addplot[green!50!black, mark=triangle*,
    nodes near coords={\pgfplotspointmeta},
    point meta=explicit symbolic,
    every node near coord/.append style={font=\scriptsize, anchor=south east, xshift=-1pt, yshift=2pt, text=green!50!black, fill=white, fill opacity=0.75, text opacity=1, inner sep=1.2pt}
] table[x=dr_lcb_burden,y=dr_lcb_value,meta=block_label,col sep=comma]{figures/phase3_external_baselines/frontier_key.csv};
\addlegendentry{DR-LCB 0.50}
\addplot[red!75!black, mark=diamond*,
    nodes near coords={\pgfplotspointmeta},
    point meta=explicit symbolic,
    every node near coord/.append style={font=\scriptsize, anchor=north west, xshift=2pt, yshift=-2pt, text=red!75!black, fill=white, fill opacity=0.75, text opacity=1, inner sep=1.2pt}
] table[x=ma_style_burden,y=ma_style_value,meta=block_label,col sep=comma]{figures/phase3_external_baselines/frontier_key.csv};
\addlegendentry{Ma et al. OPL}
\nextgroupplot[
    ylabel={Average selection stability},
]
\addplot[blue!70!black, mark=*] table[x=casp_burden,y=casp_stability,col sep=comma]{figures/phase3_external_baselines/frontier_key.csv};
\addplot[orange!90!black, mark=square*] table[x=dr_value_only_burden,y=dr_value_only_stability,col sep=comma]{figures/phase3_external_baselines/frontier_key.csv};
\addplot[green!50!black, mark=triangle*] table[x=dr_lcb_burden,y=dr_lcb_stability,col sep=comma]{figures/phase3_external_baselines/frontier_key.csv};
\addplot[red!75!black, mark=diamond*] table[x=ma_style_burden,y=ma_style_stability,col sep=comma]{figures/phase3_external_baselines/frontier_key.csv};
\end{groupplot}
\path (group c1r1.south west) -- (group c2r1.south east)
node[midway,below=12pt] {\pgfplotslegendfromname{frontierlegend}};
\end{tikzpicture}
\caption{Cross-block frontier diagnostics over Blocks 2--5. Left: average value versus average support burden. Right: average selection stability versus average support burden. The labels B2--B5 denote the coupling, support-stress, large-action, and sample-size blocks. \CASP{} consistently occupies the lowest-burden region of the learned frontier, while \citet{ma2020twostage} OPL reaches higher raw value at materially larger burden.}
\label{fig:frontier-diagnostics}
\Description{Two-panel scatter plot summarizing Blocks 2 through 5. In the left panel, each method has one point per block showing average true value against average support burden. CASP lies at the lowest-burden end of the learned frontier, DR value only and DR-LCB lie in the middle, and \citet{ma2020twostage} OPL reaches the highest values at the largest burdens. In the right panel, the same blocks are plotted with average selection stability on the vertical axis; CASP remains moderately stable while \citet{ma2020twostage} OPL is less mode-concentrated and higher burden.}
\end{figure*}
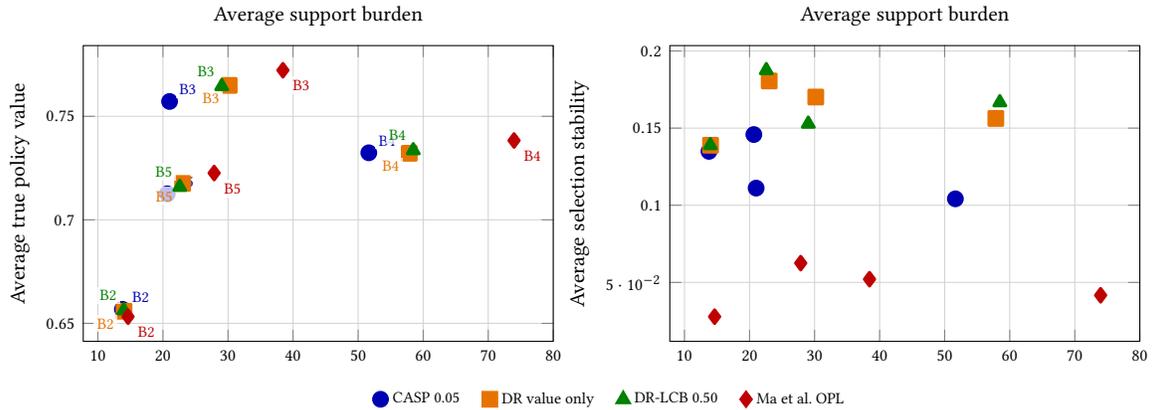

Figure~\ref{fig:frontier-diagnostics} summarizes the cross-block value--burden tradeoff. Across Blocks~2--5, \CASP{} occupies the lowest-burden region of the learned frontier, DR-value-only and DR-LCB lie in the middle, and \citet{ma2020twostage} OPL attains the highest average raw value at materially larger burden. The corresponding stability panel shows that the lower-burden \CASP{} region is not driven by degenerate or highly unstable selection.

A higher-sample-size check over $n\in\{600,1200,2400,4800\}$ gives the same qualitative pattern: \citet{ma2020twostage} OPL remains strongest on raw value, whereas \CASP{} remains the lowest-burden learned selector.

\paragraph{Takeaway.}
The simulation results support a focused empirical conclusion. \CASP{} is strongest when stage coupling is the dominant difficulty, and it remains the lowest-burden learned selector in harder regimes where raw value and evaluability are in tension.

\section{Semi-Synthetic MovieLens Application}

We evaluate \CASP{} on a reconstructed semi-synthetic \texttt{MovieLens 1M} pipeline. Because \texttt{MovieLens 1M} does not contain randomized exposure logs or native two-stage logging propensities, this application should not be interpreted as a production logged-bandit validation. Its purpose is narrower and more diagnostic: to test whether the proposed support-aware decision layer can expose policies that look attractive under offline value estimation but are not credible under the reconstructed generator--item support map. The application is therefore a controlled reconstruction study built from real user--item histories and an explicit synthetic two-stage logging layer.

\subsection{Dataset and Reconstruction Route}

The application uses the standard \texttt{MovieLens 1M} release, including the users, movies, and ratings files. Each eligible user--rating event after a warm-start history is treated as a request context. Contexts are constructed chronologically, and generator scores and feasible sets are computed using only prefix information available before the current event. The user history defines the request features; stage~1 selects a generator from a finite library, the selected generator exposes a fixed-size top-$L$ candidate set, and stage~2 selects a final item within that feasible set. The logged binary reward is derived from the observed rating label using the threshold $\text{rating}\ge 4$, and evaluation is based on temporal holdout rather than a random record split.

\subsection{Generator Library and Logging Design}

The application keeps stage~1 finite and interpretable. The generator library contains four members: a popularity-head retriever, a genre-match retriever, a collaborative-neighbor retriever, and a long-tail explorer. This library is small enough to keep the support structure legible and large enough to create meaningful exposure shifts. The reconstructed logger is defined explicitly at both stages through softmax sampling with minimum-mass floors. In the raw rating table, the positive-label rate under $\text{rating}\ge 4$ is about $0.575$.

To calibrate the reconstructed logger, we considered the stage~1 family
\[
\mu_1^{(\epsilon,\tau)}(g\mid x)
=(1-\epsilon)\,\mathrm{Softmax}_\tau(s(x))_g
\;+\;
\epsilon\,\mathrm{Unif}(\mathcal{G}(x)),
\]
over a small grid of $(\epsilon,\tau)$ values, motivated by randomized-logging, multi-policy OPE, and large-action smoothing methods \citep{li2011offline,saito2021obd,sachdeva2024policyconvolution}. Adjusting $\epsilon$ and $\tau$ alone did not give adequate generator diversity, so support construction was modified directly using temporal scanning, fallback candidate construction, hierarchical backoff for the collaborative-neighbor generator, and a capped singleton correction for generator~1. The final evaluation configuration uses $\epsilon=0.10$ and $\tau=1.00$, giving dominant stage~1 share $0.554$, minimum generator share $0.109$, strict $\ge 2$-support share $0.364$, mean generator-pair overlap $0.225$, and tail-item share $0.301$. Because MovieLens ratings are MNAR rather than true exposure logs, the resulting study is interpreted as reconstructed semi-synthetic evaluation rather than native logged-bandit evaluation \citep{wang2019mnar}.

The MovieLens study is not intended as a randomized-bandit validation of the formal assumptions. Instead, it examines whether the main objects of the theory can be implemented in a realistic recommender setting. The reconstructed pipeline therefore records induced feasible sets, stage~1 and stage~2 probabilities, support diagnostics, burden, stability, and policy deltas.

\subsection{Comparator Layer and Application Results}

The application uses the same primary comparator family as the simulation study: a stagewise learner, a plug-in reward baseline, a DR-value-only selector maximizing $\widehat V_{\DR}$, a generic DR-LCB selector, the proposed \CASP{} selector maximizing $\widehat J_\lambda$, the two-stage OPL comparator following \citet{ma2020twostage}, and the downstream-aware candidate-generator comparator following \citet{wang2025candidategen}. These external comparators are implemented within the same reconstructed \texttt{MovieLens 1M} pipeline as the other methods, using the same policy library, nuisance stack, logged samples, temporal splits, and evaluation protocol. Behavior, random, and a reconstructed-oracle diagnostic are included as reference rows. Additional ablations are reported in the appendix.

Table~\ref{tab:mlappmain} reports results over $20$ temporal replications. When policies are ranked solely by doubly robust value estimates, the \citet{wang2025candidategen} downstream-aware generator selector and the DR-value-only selector appear strongest. However, both operate at support burden on the order of $10^8$, indicating substantial mass assigned to choices outside reconstructed support under the $10^{-9}$ denominator floor. Appendix Table~\ref{tab:app-support-violation} shows floor-implied off-support mass of roughly $0.64$--$0.66$ for these high-DR selectors, whereas \CASP{} has essentially none. At $\lambda=0.05$, \CASP{} gives up some estimated value but reduces average burden from roughly $6.4$--$6.6\times 10^8$ to about $39.6$, with moderate instability across repeated splits. The \citet{ma2020twostage} external baseline remains close to DR-only on value but is highly unstable, selecting $13$ distinct policies across $20$ replications with modal frequency $0.10$. Consistent with the harder simulation blocks, the application supports interpreting \CASP{} as a support-aware conservative selector rather than a universal raw-value maximizer.

\begin{table}[!htbp]
\centering
\small
\begin{tabular}{lrrrr}
\toprule
Comparator & DR value & Burden & ESS & Max $w$ \\
\midrule
\CASP{} ($\lambda=0.05$) & 0.831 & 39.6 & 212.9 & 78.3 \\
DR value only & 0.888 & $6.58\times 10^8$ & 352.4 & 78.8 \\
DR-LCB ($\beta=0.50$) & 0.888 & $6.58\times 10^8$ & 352.4 & 78.8 \\
Ma-style OPL & 0.886 & $6.64\times 10^8$ & 405.0 & 78.8 \\
Wang-style generator & 0.898 & $6.41\times 10^8$ & 572.8 & 77.6 \\
\bottomrule
\end{tabular}
\caption{Main comparator summary for the reconstructed \texttt{MovieLens 1M} application. Higher DR value and ESS are better; lower burden and maximum importance weight are better.}
\label{tab:mlappmain}
\end{table}

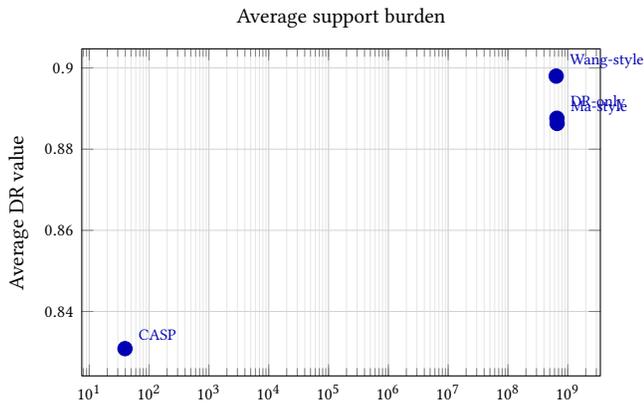
\begin{figure}[tbp]
\centering
\begin{tikzpicture}
\begin{axis}[
    caspaxis,
    width=\columnwidth,
    height=0.70\columnwidth,
    xlabel={Average support burden},
    ylabel={Average DR value},
    xmode=log,
    every axis plot/.append style={only marks, mark size=2.7pt},
]
\addplot[
    blue!70!black,
    mark=*,
    point meta=explicit symbolic,
    nodes near coords={\pgfplotspointmeta},
    every node near coord/.append style={
        font=\scriptsize,
        anchor=south west,
        xshift=2pt,
        text=blue!70!black
    }
] table[x=burden,y=value,meta=label,col sep=comma]{figures/application_assets/frontier_key.csv};
\end{axis}
\end{tikzpicture}
\caption{Application value--burden frontier for the main learned selectors. DR-LCB coincides with DR-value-only in this run and is omitted from the scatter for readability.}
\label{fig:app-frontier}
\Description{Scatter plot of average doubly robust value against average support burden for the application comparators CASP, DR-only, \citet{ma2020twostage}, and \citet{wang2025candidategen}. CASP lies at very low burden with lower DR value, while the other selectors cluster at much larger burdens and somewhat higher DR value.}
\end{figure}

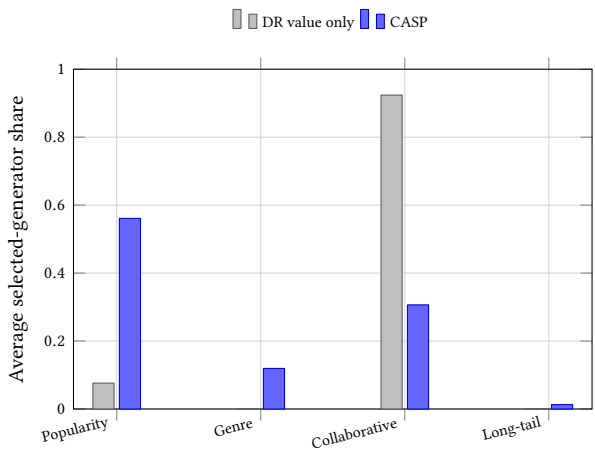
\begin{figure}[tbp]
\centering
\begin{tikzpicture}
\begin{axis}[
    caspaxis,
    ybar,
    bar width=8pt,
    width=\columnwidth,
    height=0.72\columnwidth,
    ymin=0,
    ymax=1,
    ylabel={Average selected-generator share},
    symbolic x coords={Popularity,Genre,Collaborative,LongTail},
    xtick=data,
    xticklabels={Popularity, Genre, Collaborative, Long-tail},
    x tick label style={font=\scriptsize, rotate=14, anchor=east},
    legend style={font=\scriptsize, draw=none, fill=none, at={(0.5,1.08)}, anchor=south, legend columns=2},
    grid=major,
]
\addplot[fill=gray!50, draw=gray!70!black] table[x=generator,y=dr_share,col sep=comma]{figures/application_assets/policy_delta_generator_shares.csv};
\addplot[fill=blue!60, draw=blue!80!black] table[x=generator,y=casp_share,col sep=comma]{figures/application_assets/policy_delta_generator_shares.csv};
\legend{DR value only, \CASP{}}
\end{axis}
\end{tikzpicture}
\caption{Generator-level policy delta between DR-value-only and \CASP{}. DR-value-only concentrates almost entirely on the collaborative-neighbor generator, while \CASP{} redistributes mass toward popularity-head and genre-match support.}
\label{fig:app-policy-delta}
\Description{Grouped bar chart comparing mean selected-generator shares under DR value only versus CASP across the four generators. DR value only is concentrated on the collaborative generator, while CASP has larger popularity and genre shares and a smaller collaborative share.}
\end{figure}

Figure~\ref{fig:app-frontier} summarizes the application value--burden tradeoff. \CASP{} occupies the low-burden region of the frontier, whereas the \citet{wang2025candidategen}, DR-only, and \citet{ma2020twostage} selectors lie at substantially larger burden. Figure~\ref{fig:app-policy-delta} shows that this burden reduction is associated with a stage~1 redistribution: DR-value-only selects generator~2 on about $92\%$ of evaluated contexts, whereas \CASP{} spreads mass across generators and gives average shares of approximately $(0.561, 0.120, 0.307, 0.013)$.

The large burden gap should be interpreted qualitatively rather than as a literal multiplicative improvement. The largest values arise when high-DR selectors place mass on generator--item pairs assigned zero support by the reconstructed logger; the denominator floor makes those violations finite and visible. For this reason, support burden is more informative here than generic ESS or maximum observed importance weight. ESS depends only on realized logged weights, whereas burden evaluates the target policy against the full reconstructed support map and therefore exposes off-support mass directly, even when those unsupported choices do not appear as extreme realized weights in the logged sample.

\paragraph{Robustness checks.}
The appendix reports sensitivity analyses for $\lambda$, effective sample size, maximum observed weight, off-support burden, and nearby MovieLens reconstruction settings. None of these analyses changes the main conclusion. In particular, moving from $\lambda=0$ to a positive penalty is sufficient to reject the high-DR off-support selector, while larger positive penalties preserve essentially the same low-burden policy family. In this application, the $\lambda$ curve therefore behaves more like a support-feasibility threshold than a smooth utility--risk frontier. Across the reported variants, \CASP{} remains in a low-burden range of roughly $25$--$41$, whereas DR-only and \citet{ma2020twostage} selection remain on the order of $6.6$--$7.4\times 10^8$.

\section{Related Work}

The closest architectural predecessor is the two-stage off-policy learning paper of \citet{ma2020twostage}, which shows that ignoring interaction between candidate generation and ranking can lead to suboptimal policies. More recent candidate-generator evaluation work studies downstream value under new action sets and makes explicit that retrieval quality should be judged in terms of the end recommendation task, not only through upstream retrieval metrics \citep{wang2025candidategen}. \CASP{} builds directly on these ideas, but shifts the center of gravity from two-stage interaction alone to stage-induced feasible support as the object that drives both offline learning and deployment-facing model selection. This distinction matters because modern large-scale recommenders often use multi-stage architectures in which candidate generation first narrows a large item universe before a more expensive ranker scores the exposed candidates \citep{covington2016youtube,chen2019topk}. Classical recommender objectives such as implicit-feedback matrix factorization and Bayesian personalized ranking are useful for representation learning and ranking \citep{hu2008implicit,rendle2009bpr}, but they do not directly address whether a new generator--ranker policy is credible under the logged support created by a two-stage pipeline.

At a broader level, the paper inherits the main estimators and learning principles of contextual-bandit OPE and OPL, especially IPS, self-normalized estimation, doubly robust estimation, shrinkage, and counterfactual risk minimization \citep{li2011offline,dudik2014doublyrobust,swaminathan2015crm,swaminathan2015jmlr,swaminathan2015snips,wang2017optimal,su2020shrinkage}. Related work in counterfactual recommendation and learning to rank emphasizes that logged recommendation data are biased by the policy that generated exposure, rather than being an unbiased sample of user preferences \citep{bottou2013counterfactual,schnabel2016recommendations,joachims2017unbiasedltr,agarwal2018cltr,joachims2018deep,gilotte2018offline}. These ideas motivate the use of logged propensities and counterfactual estimators in our work. However, standard contextual-bandit and counterfactual-ranking formulations often treat the feasible action set as fixed once the context is observed. Two-stage recommenders violate that simplification because the first-stage generator changes the support on which the second-stage ranker operates.

This support issue connects the paper to large-action and structured-action evaluation literatures. Slate OPE shows that naive importance weighting is often untenable for structured recommendation outputs, especially when rewards depend on multiple displayed items or sequential user responses \citep{swaminathan2017slate,vlassis2021slatecv,mcinerney2020sequentialslate,ie2019slateq}. Large-action OPE shows that weak support and action-side generalization are central obstacles even before one adds architectural coupling \citep{sachdeva2020deficientsupport,saito2022embeddings,saito2023offcem,peng2023actiongrouping,sachdeva2024policyconvolution}. Reproducible logged-bandit resources and robustness studies further show that OPE conclusions can be sensitive to logging design, estimator choice, and support quality \citep{saito2021obd,openbanditpipeline,saito2021robustness}. The present paper absorbs those lessons but asks a different question: how should one select a coupled two-stage policy when support deficiency is created upstream by the candidate generator itself?

Finally, the paper sits alongside pessimistic and conservative offline learning in recommendation. Pessimistic reward modeling and multivariate policy learning both show that deployment-facing offline decisions should guard against unsupported optimism \citep{jeunen2021pessimistic,jeunen2022pessimisticjournal,jeunen2024multivariate}. Recent surveys on causal recommendation and offline reinforcement learning for recommender systems also emphasize that logged recommendation data involve exposure bias, confounding, feedback loops, distribution shift, and limited support \citep{luo2024causalsurvey,gao2024causalrecsurvey,chen2024offlinerlrecsys,levine2020offlinerlsurvey}. Our contribution is to make that conservative logic explicitly two-stage and support-structural: the penalty is tied to how the first-stage generator shapes downstream evaluability. Thus, \CASP{} converts offline evaluation from a single value-ranking problem into a value--support selection problem for two-stage recommender deployment.

\section{Discussion and Limitations}

This paper studies a deliberately narrow setting: stage~1 is a finite generator choice, stage~2 is a single-item choice from an induced feasible set, and learning is based on a coupled pessimistic objective. Within this scope, the theory provides a population guarantee, a finite-class selection result, and a reconstructed-propensity extension. It does not address broader surrogate-consistency questions, fully unknown-propensity orthogonalization with shared cross-fitted nuisances, or slate-valued second-stage actions. The MovieLens study is reconstructed and semi-synthetic rather than a production randomized-logging evaluation, so its role is illustrative rather than definitive.

\paragraph{Calibration limits in the application.}
The MovieLens $\lambda$ sensitivity is best interpreted as threshold behavior rather than as a smoothly calibrated risk--utility frontier. In the reported application, moving from $\lambda=0$ to a positive penalty is sufficient to reject a high-DR off-support selector, whereas larger positive penalties mostly preserve the same low-burden policy family. A smoother frontier would likely require a richer candidate policy library, more graded overlap, or a different burden-normalization design.

\paragraph{When \CASP{} is most useful.}
\CASP{} is most useful when high estimated value and reliable support are genuinely in tension. If the logged system already provides strong overlap for all credible policies, a value-only selector may be adequate. Conversely, if the decision maker is willing to prioritize raw value despite weak evaluability, the burden penalty may be too conservative. The method also depends on the quality of the reconstructed logging model, since the burden diagnostic is calibrated through that model.

Natural extensions include broader function-class theory, unknown-propensity settings, slate-valued second-stage decisions, and richer stage~1 parameterizations. These directions are important, but they are not required for the support-aware contribution developed here.

\section*{Code Availability}
The code for \CASP{}, including the simulation pipeline, the semi-synthetic MovieLens application, experiment configurations, and paper-asset generation scripts, is publicly available at
\url{https://github.com/nilson01/CASP}.

\bibliographystyle{ACM-Reference-Format}
\bibliography{refs}

\clearpage
\appendix
\onecolumn

\section{Proofs and Additional Results}
\label{app:technical}

This appendix contains the full proofs, supporting lemmas, and additional empirical results referenced in the main text.

\subsection{Proof of Proposition~\ref{prop:dynamic}}

\begin{proof}
Fix a stage-1 rule $\pi_1$. For any realized context $x$ and action $a_1$, the stage-2 contribution
\[
\sum_{a_2\in S(x,a_1)}\pi_2(a_2\mid x,a_1)q(x,a_1,a_2)
\]
is linear in the probability vector $\pi_2(\cdot\mid x,a_1)$ over the simplex on $S(x,a_1)$. A linear functional over a simplex is maximized at an extreme point, so for each $(x,a_1)$ the optimal stage-2 rule places all mass on an action attaining
\[
m^\star(x,a_1)=\max_{a_2\in S(x,a_1)} q(x,a_1,a_2).
\]
Substituting this pointwise optimum into \eqref{eq:value} gives
\[
\sup_{\pi_2}V(\pi_1,\pi_2)
=
\E\Bigg[
\sum_{a_1\in\cA_1}\pi_1(a_1\mid X)m^\star(X,a_1)
\Bigg].
\]
The displayed quantity is again linear in the stage-1 probability vector $\pi_1(\cdot\mid x)$ for each fixed $x$, so a value-optimal stage-1 rule may be chosen by putting all mass on an action in $\argmax_{a_1} m^\star(x,a_1)$.
\end{proof}

\subsection{Proof of Theorem~\ref{thm:stagewise}}

\begin{proof}
Take a degenerate context distribution with $X\equiv x_0$ almost surely. Choose two distinct downstream items $u,v\in\cA_2$. Define singleton feasible sets
\[
S(x_0,a)=\{u\}\quad \text{for all } a\in A_g^\star(x_0),
\qquad
S(x_0,b)=\{v\}.
\]
Assign any nonempty feasible set to the remaining stage-1 actions, if any.
Set the reward regression to
\[
q(x_0,a,u)=0\quad \text{for all } a\in A_g^\star(x_0),
\qquad
q(x_0,b,v)=M.
\]
Assign reward regression value $0$ to all remaining feasible triples, and choose a logging policy with positive probability on all feasible pairs used in the construction. Because each displayed feasible set is a singleton, stage 2 is forced once one of those stage-1 actions is chosen, regardless of the stage-2 score $h$. Every continuation-blind separable rule puts first-stage mass only on $A_g^\star(x_0)$ by definition of the stage-1 proxy maximizer set, and therefore has value $0$.

By Proposition~\ref{prop:dynamic}, however, the correct stage-1 target is the downstream continuation value. In this construction,
\[
m^\star(x_0,a)=0\quad \text{for all } a\in A_g^\star(x_0),
\qquad
m^\star(x_0,b)=M,
\]
so any end-to-end value-optimal policy selects $b$ at $x_0$. The resulting value gap is exactly $M$. The construction clearly satisfies finiteness, consistency, ignorability, and support compatibility, which proves the claim.
\end{proof}

\subsection{Proof of Proposition~\ref{prop:secondmoment}}

\begin{proof}
Condition on $X=x$. Using the logged policy to expand the conditional expectation,
\begin{align*}
\E[w_\pi(X,A_1,A_2)^2\mid X=x]
&=
\sum_{a_1\in\cA_1}\sum_{a_2\in S(x,a_1)}
\mu_1(a_1\mid x)\mu_2(a_2\mid x,a_1)
\Bigg(
\frac{\pi_1(a_1\mid x)\pi_2(a_2\mid x,a_1)}
     {\mu_1(a_1\mid x)\mu_2(a_2\mid x,a_1)}
\Bigg)^2 \\
&=
\sum_{a_1\in\cA_1}\sum_{a_2\in S(x,a_1)}
\frac{\pi_1(a_1\mid x)^2\pi_2(a_2\mid x,a_1)^2}
     {\mu_1(a_1\mid x)\mu_2(a_2\mid x,a_1)} \\
&= B_\mu(\pi;x).
\end{align*}
Taking expectation over $X$ gives $\E[w_\pi(X,A_1,A_2)^2]=B_\mu(\pi)$.

For the IPS estimator,
\[
\widehat V_{\IPS}(\pi)=\frac{1}{n}\sum_{i=1}^n w_\pi(X_i,A_{1i},A_{2i})Y_i.
\]
The observations are i.i.d., so
\[
\operatorname{Var}\{\widehat V_{\IPS}(\pi)\}
=
\frac{1}{n}\operatorname{Var}\{w_\pi(X,A_1,A_2)Y\}
\le
\frac{1}{n}\E[w_\pi(X,A_1,A_2)^2Y^2].
\]
Since $0\le Y\le M$,
\[
\E[w_\pi(X,A_1,A_2)^2Y^2]
\le
M^2\E[w_\pi(X,A_1,A_2)^2]
=
M^2B_\mu(\pi).
\]
Therefore
\[
\operatorname{Var}\{\widehat V_{\IPS}(\pi)\}
\le
\frac{M^2}{n}B_\mu(\pi).
\]
\end{proof}

\subsection{Proof of Theorem~\ref{thm:popregret}}

\begin{proof}
Because $\pi_\lambda$ maximizes $J_\lambda(\pi)=V(\pi)-\lambda B_\mu(\pi)$ over $\Pi$,
\[
J_\lambda(\pi_\lambda)\ge J_\lambda(\pi^\star).
\]
Expanding both sides gives
\[
V(\pi_\lambda)-\lambda B_\mu(\pi_\lambda)
\ge
V(\pi^\star)-\lambda B_\mu(\pi^\star).
\]
Rearranging,
\[
V(\pi^\star)-V(\pi_\lambda)
\le
\lambda\bigl(B_\mu(\pi^\star)-B_\mu(\pi_\lambda)\bigr).
\]
Since $B_\mu(\pi_\lambda)\ge 0$, we obtain
\[
V(\pi^\star)-V(\pi_\lambda)\le \lambda B_\mu(\pi^\star),
\]
as claimed.
\end{proof}

\subsection{Auxiliary Lemmas for Conservative Selection}

\begin{lemma}[Conditional unbiasedness of the doubly robust score]
\label{lem:drunbiasedappendix}
Assume the out-of-sample scoring convention used in the definition of $\widehat V_{\DR}$, so that each fitted reward model used to score an observation is independent of that observation. Then for every fixed policy $\pi$,
\[
\E[\psi_\pi(O;\hat q)\mid \hat q]=V(\pi).
\]
Consequently, $\E[\widehat V_{\DR}(\pi)]=V(\pi)$.
\end{lemma}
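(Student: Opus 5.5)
The approach is to condition on the fitted reward model $\hat q$ and to split the score \eqref{eq:drscore} into its direct-method term and its importance-weighted residual term. Because $\hat q$ is fitted out of sample, we may treat it as a fixed, deterministic function $r=\hat q$ when we take expectations over the evaluated observation $O=(X,A_1,A_2,Y)$. Under this conditioning, $O$ still has its original law. Assumption~\ref{ass:finite} bounds $Y$, and the weights are finite on the relevant support by Assumption~\ref{ass:overlap}. So every conditional expectation below is well defined; if $\hat q$ is not assumed bounded, we note that the sums over finitely many actions are finite whenever $r$ is finite.

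\textbf{Step 1: the residual term.} We condition further on $X=x$ and expand the weighted residual using the logging law $\mu_1(a_1\mid x)\mu_2(a_2\mid x,a_1)$ over feasible pairs. By Assumption~\ref{ass:ignorability} and the definition of $q$, we have $\E[Y\mid X,A_1,A_2]=q(X,A_1,A_2)$. This gives
\[
\E\bigl[w_\pi(X,A_1,A_2)\{Y-r(X,A_1,A_2)\}\mid X=x\bigr]
=\sum_{a_1}\sum_{a_2\in S(x,a_1)}\mu_1\mu_2\,\frac{\pi_1\pi_2}{\mu_1\mu_2}\,\{q-r\}(x,a_1,a_2).
\]
The key cancellation is $\mu_1\mu_2\cdot\pi_1\pi_2/(\mu_1\mu_2)=\pi_1\pi_2$. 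It is legitimate because, by Assumption~\ref{ass:overlap}, every pair with $\pi_1\pi_2>0$ also has $\mu_1\mu_2>0$. Pairs with $\pi_1\pi_2=0$ contribute zero in either case, and pairs with $\mu_1\mu_2=0$ are never logged, so they drop out. The residual term therefore equals $m_q(x;\pi)-m_r(x;\pi)$.

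\textbf{Step 2: add the direct term.} The direct term $m_r(X;\pi)$ is $X$-measurable, so adding it to Step~1 gives $\E[\psi_\pi(O;r)\mid X]=m_q(X;\pi)$. Taking the expectation over $X$ and comparing with \eqref{eq:value} yields $V(\pi)$. Since this holds for every fixed $r$, we conclude $\E[\psi_\pi(O;\hat q)\mid\hat q]=V(\pi)$. For the consequence, we apply this to each summand $\psi_\pi(O_i;\hat q_i)$, using that $\hat q_i$ is independent of $O_i$. The tower property then gives $\E[\psi_\pi(O_i;\hat q_i)]=V(\pi)$, and averaging over $i$ gives $\E[\widehat V_{\DR}(\pi)]=V(\pi)$.

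\textbf{Main obstacle.} The delicate point is justifying that conditioning on $\hat q$ leaves the law of $O$ unchanged, so that $\hat q$ can be treated as deterministic. This is exactly what the out-of-sample independence assumption provides. I would state it explicitly through the identity $\E[f(O,\hat q)\mid\hat q]=\E[f(O,r)]\big|_{r=\hat q}$, which holds for independent $O$ and $\hat q$, together with the support caveat used in the cancellation of Step~1.
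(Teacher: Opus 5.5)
Your proposal is correct and follows essentially the same route as the paper's proof. Like the paper, you condition on the out-of-sample fit and expand the weighted residual under the logging law given $X=x$, so the $\mu_1\mu_2$ factors cancel to $\sum \pi_1\pi_2(q-\hat q)$; you then add the plug-in term $m_{\hat q}$ to recover $m_q$, integrate over $X$ via \eqref{eq:value}, and average over observations using the tower property. Beyond the paper's argument, you invoke Assumption~\ref{ass:overlap} explicitly to justify the cancellation, and you use the independence identity to treat $\hat q$ as fixed; the paper leaves both of these points implicit.
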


\begin{proof}
Condition on $\hat q$ and on a generic evaluation observation $O=(X,A_1,A_2,Y)$ that is independent of the fit used to score it. By definition of \eqref{eq:drscore},
\begin{align*}
\E[\psi_\pi(O;\hat q)\mid \hat q]
&=
\E[m_{\hat q}(X;\pi)\mid \hat q] \\
&\quad+
\E\Bigg[
\frac{\pi_1(A_1\mid X)\pi_2(A_2\mid X,A_1)}
     {\mu_1(A_1\mid X)\mu_2(A_2\mid X,A_1)}
\bigl(Y-\hat q(X,A_1,A_2)\bigr)
\Bigm|\hat q
\Bigg].
\end{align*}
Now condition on $X=x$. The second term becomes
\begin{align*}
&\sum_{a_1\in\cA_1}\sum_{a_2\in S(x,a_1)}
\mu_1(a_1\mid x)\mu_2(a_2\mid x,a_1)
\frac{\pi_1(a_1\mid x)\pi_2(a_2\mid x,a_1)}
     {\mu_1(a_1\mid x)\mu_2(a_2\mid x,a_1)}
\bigl(q(x,a_1,a_2)-\hat q(x,a_1,a_2)\bigr) \\
&=
\sum_{a_1\in\cA_1}\sum_{a_2\in S(x,a_1)}
\pi_1(a_1\mid x)\pi_2(a_2\mid x,a_1)
\bigl(q(x,a_1,a_2)-\hat q(x,a_1,a_2)\bigr).
\end{align*}
Adding the plug-in term $m_{\hat q}(x;\pi)$ cancels the $\hat q$ terms and leaves
\[
\sum_{a_1\in\cA_1}\sum_{a_2\in S(x,a_1)}
\pi_1(a_1\mid x)\pi_2(a_2\mid x,a_1)q(x,a_1,a_2).
\]
Taking expectation over $X$ gives $V(\pi)$ by \eqref{eq:value}. Averaging over the independent evaluation observations gives $\E[\widehat V_{\DR}(\pi)]=V(\pi)$.
\end{proof}

\begin{lemma}[Uniform-selection reduction]
\label{lem:oracleappendix}
Let
\[
J_\lambda(\pi)=V(\pi)-\lambda B_\mu(\pi),
\qquad
\widehat J_\lambda(\pi)=\widehat V_{\DR}(\pi)-\lambda\widehat B_\mu(\pi),
\]
and define
\[
\pi_\lambda\in\argmax_{\pi\in\Pi}J_\lambda(\pi),
\qquad
\widehat\pi_\lambda\in\argmax_{\pi\in\Pi}\widehat J_\lambda(\pi).
\]
If
\[
\sup_{\pi\in\Pi}|\widehat V_{\DR}(\pi)-V(\pi)|\le \varepsilon_V
\qquad\text{and}\qquad
\sup_{\pi\in\Pi}|\widehat B_\mu(\pi)-B_\mu(\pi)|\le \varepsilon_B,
\]
then
\[
V(\pi^\star)-V(\widehat\pi_\lambda)
\le
\lambda B_\mu(\pi^\star)+2\varepsilon_V+2\lambda\varepsilon_B.
\]
\end{lemma}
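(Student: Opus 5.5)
The plan is a standard three-step comparison chain: pass from the empirical objective to the population objective using the uniform deviation bounds, use the optimality of $\widehat\pi_\lambda$ for $\widehat J_\lambda$ in between, and finish with the nonnegativity of the burden exactly as in the proof of Theorem~\ref{thm:popregret}. First, I combine the two hypotheses. For every $\pi\in\Pi$, the triangle inequality gives
\[
\bigl|\widehat J_\lambda(\pi)-J_\lambda(\pi)\bigr|
\le
\bigl|\widehat V_{\DR}(\pi)-V(\pi)\bigr|+\lambda\bigl|\widehat B_\mu(\pi)-B_\mu(\pi)\bigr|
\le
\varepsilon_V+\lambda\varepsilon_B,
\]
since $\lambda\ge 0$. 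Write $\varepsilon=\varepsilon_V+\lambda\varepsilon_B$.

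Second, I compare $\widehat\pi_\lambda$ directly with $\pi^\star$ rather than going through $\pi_\lambda$. Three inequalities give the chain
\[
J_\lambda(\widehat\pi_\lambda)\ge \widehat J_\lambda(\widehat\pi_\lambda)-\varepsilon \ge \widehat J_\lambda(\pi^\star)-\varepsilon \ge J_\lambda(\pi^\star)-2\varepsilon.
\]
The first and last steps use the uniform bound above. The middle step uses that $\widehat\pi_\lambda$ maximizes $\widehat J_\lambda$ over $\Pi$, and $\pi^\star\in\Pi$. Going through $\pi^\star$ (rather than $\pi_\lambda$) is what lets $\lambda B_\mu(\pi^\star)$ appear directly.

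Third, I expand $J_\lambda$ on both ends and rearrange:
\[
V(\pi^\star)-V(\widehat\pi_\lambda)\le \lambda\bigl(B_\mu(\pi^\star)-B_\mu(\widehat\pi_\lambda)\bigr)+2\varepsilon_V+2\lambda\varepsilon_B.
\]
I then drop $-\lambda B_\mu(\widehat\pi_\lambda)\le 0$. This is valid because $B_\mu$ is an expectation of a sum of nonnegative terms, with denominators positive on the relevant support by Assumption~\ref{ass:overlap}. The result is the claimed inequality.

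There is no real obstacle here; the only care needed is bookkeeping. Nonnegativity of $\lambda$ is used both in the triangle inequality and in dropping the burden term. The argument is deterministic given the two uniform bounds, so it requires no probabilistic input.
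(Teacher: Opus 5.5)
Your proof is correct and uses the same ingredients as the paper's: the uniform bound $|\widehat J_\lambda-J_\lambda|\le\varepsilon_V+\lambda\varepsilon_B$, the optimality of $\widehat\pi_\lambda$ for $\widehat J_\lambda$, and dropping $-\lambda B_\mu(\widehat\pi_\lambda)\le 0$. The only difference is that you chain directly against $\pi^\star$, whereas the paper first compares with $\pi_\lambda$ and then invokes $J_\lambda(\pi_\lambda)\ge J_\lambda(\pi^\star)$; your shortcut is slightly cleaner, and it also avoids the sign slip in the paper's decomposition line, which writes $+\lambda B_\mu(\widehat\pi_\lambda)$ where $-\lambda B_\mu(\widehat\pi_\lambda)$ is needed.
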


\begin{proof}
On the displayed event, every candidate policy satisfies
\[
\widehat J_\lambda(\pi)\le J_\lambda(\pi)+\varepsilon_V+\lambda\varepsilon_B
\qquad\text{and}\qquad
J_\lambda(\pi)\le \widehat J_\lambda(\pi)+\varepsilon_V+\lambda\varepsilon_B.
\]
Because $\widehat\pi_\lambda$ maximizes $\widehat J_\lambda$,
\begin{align*}
J_\lambda(\pi_\lambda)
&\le \widehat J_\lambda(\pi_\lambda)+\varepsilon_V+\lambda\varepsilon_B \\
&\le \widehat J_\lambda(\widehat\pi_\lambda)+\varepsilon_V+\lambda\varepsilon_B \\
&\le J_\lambda(\widehat\pi_\lambda)+2\varepsilon_V+2\lambda\varepsilon_B.
\end{align*}
Now write
\begin{align*}
V(\pi^\star)-V(\widehat\pi_\lambda)
&=
\bigl(V(\pi^\star)-J_\lambda(\pi^\star)\bigr)
+\bigl(J_\lambda(\pi^\star)-J_\lambda(\widehat\pi_\lambda)\bigr)
+\lambda B_\mu(\widehat\pi_\lambda) \\
&\le
\lambda B_\mu(\pi^\star)+J_\lambda(\pi_\lambda)-J_\lambda(\widehat\pi_\lambda),
\end{align*}
where we used $J_\lambda(\pi_\lambda)\ge J_\lambda(\pi^\star)$ and $\lambda B_\mu(\widehat\pi_\lambda)\ge 0$. Combining the two displays proves the claim.
\end{proof}

\subsection{Proof of Theorem~\ref{thm:finite}}

\begin{proof}
Fix $\pi\in\Pi$. By Lemma~\ref{lem:drunbiasedappendix},
\[
\E[\widehat V_{\DR}(\pi)]=V(\pi).
\]
Under the lower bound $\mu_1,\mu_2\ge \nu$ on the relevant support and the reward bound $0\le Y\le M$, each doubly robust score obeys
\[
|m_{\hat q}(X;\pi)|\le M
\qquad\text{and}\qquad
\left|
\frac{\pi_1(A_1\mid X)\pi_2(A_2\mid X,A_1)}
     {\mu_1(A_1\mid X)\mu_2(A_2\mid X,A_1)}
\bigl(Y-\hat q(X,A_1,A_2)\bigr)
\right|
\le M\nu^{-2},
\]
so
\[
|\psi_\pi(O;\hat q)|\le M(1+\nu^{-2})=:C_V.
\]
Hoeffding's inequality therefore gives
\[
\Pbb\Bigl(|\widehat V_{\DR}(\pi)-V(\pi)|>t\Bigr)
\le 2\exp\Bigl(-\frac{2nt^2}{C_V^2}\Bigr).
\]
Applying a union bound over the finite class $\Pi$ and setting the right-hand side to $\alpha/2$ gives
\[
\sup_{\pi\in\Pi}\bigl|\widehat V_{\DR}(\pi)-V(\pi)\bigr|
\le
M(1+\nu^{-2})\sqrt{\frac{\log(4|\Pi|/\alpha)}{2n}}
\]
with probability at least $1-\alpha/2$.

Next fix $\pi\in\Pi$ and consider the bounded random variable $B_\mu(\pi;X)$. Since each squared-probability sum is at most one,
\begin{align*}
B_\mu(\pi;X)
&=
\sum_{a_1\in\cA_1}\sum_{a_2\in S(X,a_1)}
\frac{\pi_1(a_1\mid X)^2\pi_2(a_2\mid X,a_1)^2}
     {\mu_1(a_1\mid X)\mu_2(a_2\mid X,a_1)} \\
&\le
\nu^{-2}
\sum_{a_1\in\cA_1}\pi_1(a_1\mid X)^2
\sum_{a_2\in S(X,a_1)}\pi_2(a_2\mid X,a_1)^2 \\
&\le \nu^{-2}.
\end{align*}
Another Hoeffding-plus-union-bound argument gives
\[
\sup_{\pi\in\Pi}\bigl|\widehat B_\mu(\pi)-B_\mu(\pi)\bigr|
\le
\nu^{-2}\sqrt{\frac{\log(4|\Pi|/\alpha)}{2n}}
\]
with probability at least $1-\alpha/2$. Intersecting the two events gives the pair of uniform bounds in the theorem statement with probability at least $1-\alpha$.

Finally apply Lemma~\ref{lem:oracleappendix} with
\[
\varepsilon_V=M(1+\nu^{-2})\sqrt{\frac{\log(4|\Pi|/\alpha)}{2n}}
\qquad\text{and}\qquad
\varepsilon_B=\nu^{-2}\sqrt{\frac{\log(4|\Pi|/\alpha)}{2n}},
\]
which gives the advertised value bound for $\widehat\pi_\lambda$.
\end{proof}

\subsection{Estimated Propensities and Nuisance Error}

Let
\[
e(x,a_1,a_2)=\mu_1(a_1\mid x)\mu_2(a_2\mid x,a_1),
\]
and let $\hat e_i$ be an out-of-sample estimate or reconstructed analogue of the same joint propensity for evaluation observation $i$. Define the estimated-propensity score
\[
\hat\psi_{\pi,i}
=
m_{\hat q_i}(X_i;\pi)
+
\frac{\pi_1(A_{1i}\mid X_i)\pi_2(A_{2i}\mid X_i,A_{1i})}
     {\hat e_i(X_i,A_{1i},A_{2i})}
\bigl(Y_i-\hat q_i(X_i,A_{1i},A_{2i})\bigr),
\]
the corresponding value estimator
\[
\widehat V_{\DR,\hat e}(\pi)=\frac{1}{n}\sum_{i=1}^n \hat\psi_{\pi,i},
\]
and the corresponding empirical burden
\[
\widehat B_{\hat e}(\pi)=\frac{1}{n}\sum_{i=1}^n
\sum_{a_1\in\cA_1}\sum_{a_2\in S(X_i,a_1)}
\frac{\pi_1(a_1\mid X_i)^2\pi_2(a_2\mid X_i,a_1)^2}
     {\hat e_i(X_i,a_1,a_2)}.
\]
For fold-specific propensities, define the nuisance-conditional population burden average
\begin{align*}
B_{\hat e,n}(\pi)
\;=\;&
\frac{1}{n}\sum_{i=1}^n
\E\Bigg[
\sum_{a_1\in\cA_1}\sum_{a_2\in S(X,a_1)}
\frac{\pi_1(a_1\mid X)^2\pi_2(a_2\mid X,a_1)^2}
     {\hat e_i(X,a_1,a_2)}
\;\Bigg|\;\hat e_i
\Bigg].
\end{align*}

\begin{corollary}[Finite-class guarantee with estimated propensities]
\label{cor:estimatedappendix}
Suppose $\Pi$ is finite, each evaluation observation is scored with out-of-sample $\hat q_i$ and $\hat e_i$, $0\le \hat q_i\le M$, and $\hat e_i\ge \hat\nu^2$ on the relevant support. Define
\[
\Delta_V
=
\sup_{\pi\in\Pi}
\left|
\E[\widehat V_{\DR,\hat e}(\pi)\mid \{\hat q_i,\hat e_i\}_{i=1}^n]-V(\pi)
\right|,
\]
\[
\Delta_B
=
\sup_{\pi\in\Pi}\bigl|B_{\hat e,n}(\pi)-B_\mu(\pi)\bigr|.
\]
Then, conditional on the nuisance fits, with probability at least $1-\alpha$,
\[
\sup_{\pi\in\Pi}
\bigl|\widehat V_{\DR,\hat e}(\pi)-V(\pi)\bigr|
\le
r_V(n,\alpha)+\Delta_V,
\]
\[
\sup_{\pi\in\Pi}
\bigl|\widehat B_{\hat e}(\pi)-B_\mu(\pi)\bigr|
\le
r_B(n,\alpha)+\Delta_B,
\]
where
\begin{align*}
r_V(n,\alpha) &= M(1+2\hat\nu^{-2})\sqrt{\frac{\log(4|\Pi|/\alpha)}{2n}}, \\
r_B(n,\alpha) &= \hat\nu^{-2}\sqrt{\frac{\log(4|\Pi|/\alpha)}{2n}}.
\end{align*}
Consequently, if
\[
\widehat\pi_{\lambda,\hat e}
\in
\argmax_{\pi\in\Pi}
\left\{
\widehat V_{\DR,\hat e}(\pi)-\lambda\widehat B_{\hat e}(\pi)
\right\},
\]
then
\[
V(\pi^\star)-V(\widehat\pi_{\lambda,\hat e})
\le
\lambda B_\mu(\pi^\star)
+ 2\{r_V(n,\alpha)+\Delta_V\}
+ 2\lambda\{r_B(n,\alpha)+\Delta_B\}
\]
with the same probability.
\end{corollary}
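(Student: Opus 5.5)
The plan is to decompose each estimation error into a centred fluctuation and a conditional bias, and then reuse Lemma~\ref{lem:oracleappendix}. For each fixed $\pi\in\Pi$ write
\[
\widehat V_{\DR,\hat e}(\pi)-V(\pi)
=\Bigl(\widehat V_{\DR,\hat e}(\pi)-\E[\widehat V_{\DR,\hat e}(\pi)\mid \{\hat q_i,\hat e_i\}_{i=1}^n]\Bigr)
+\Bigl(\E[\widehat V_{\DR,\hat e}(\pi)\mid \{\hat q_i,\hat e_i\}_{i=1}^n]-V(\pi)\Bigr).
\]
The second bracket is at most $\Delta_V$ in absolute value by definition. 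Similarly,
\[
\widehat B_{\hat e}(\pi)-B_\mu(\pi)=\bigl(\widehat B_{\hat e}(\pi)-B_{\hat e,n}(\pi)\bigr)+\bigl(B_{\hat e,n}(\pi)-B_\mu(\pi)\bigr),
\]
and the second bracket is at most $\Delta_B$. So it suffices to control the centred fluctuations uniformly over $\Pi$, conditional on the nuisance fits.

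\textbf{Fluctuations.} Conditional on the nuisances, the out-of-sample convention makes the evaluation observations independent, though not identically distributed because the fits are observation-specific. Hoeffding's inequality for independent bounded summands still applies.
\begin{itemize}[leftmargin=1.5em]
\item For the value score, $0\le \hat q_i\le M$ gives $|m_{\hat q_i}(X_i;\pi)|\le M$.
\item The numerator $\pi_1\pi_2\le 1$ and $\hat e_i\ge\hat\nu^2$ bound the weight by $\hat\nu^{-2}$. Since $|Y_i-\hat q_i|\le M$, each $\hat\psi_{\pi,i}$ lies in an interval of length at most $2M(1+\hat\nu^{-2})$.
\item I will use the cruder envelope $C=M(1+2\hat\nu^{-2})$ appearing in $r_V$. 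One route is to bound the range width loosely by $C$ (e.g.\ $\hat\psi_{\pi,i}\in[-M\hat\nu^{-2},\,M+M\hat\nu^{-2}]$), or to absorb constants so that the Hoeffding width matches $r_V$.
\item For the burden, each per-observation summand satisfies $0\le\sum\pi_1^2\pi_2^2/\hat e_i\le\hat\nu^{-2}$, exactly as in the proof of Theorem~\ref{thm:finite}. Its conditional mean given $\hat e_i$ is the $i$th term of $B_{\hat e,n}(\pi)$.
\end{itemize}
Hoeffding plus a union bound over $\Pi$, allocating $\alpha/2$ to each family, then gives $r_V(n,\alpha)$ and $r_B(n,\alpha)$ simultaneously with conditional probability at least $1-\alpha$.

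\textbf{Selection bound.} On this event the triangle inequality yields the two displayed uniform bounds with $\varepsilon_V=r_V+\Delta_V$ and $\varepsilon_B=r_B+\Delta_B$. Lemma~\ref{lem:oracleappendix} is a purely deterministic statement about any estimated objective of the form value minus $\lambda$ times burden. It applies verbatim with $\widehat V_{\DR,\hat e}$ and $\widehat B_{\hat e}$ in place of $\widehat V_{\DR}$ and $\widehat B_\mu$, and gives the final regret inequality.

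\textbf{Main obstacle.} The delicate step is the fluctuation bound: justifying conditional independence across evaluation observations when fits are fold-specific, and matching the stated envelope constant $1+2\hat\nu^{-2}$ to the Hoeffding range. If observations in the same fold share a fit, independence still holds conditionally on all fits, because each fit uses only other folds. If the range width exceeds the envelope by a factor, I would either rescale $\alpha$ or note that the score's range is at most $2C$ and adjust. The stated constant suggests the intended Hoeffding form uses $C$ directly, as in Theorem~\ref{thm:finite}, so I would mirror that proof's convention.
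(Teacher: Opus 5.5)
Your proposal is essentially the paper's proof. It uses the same split of each error into a centred fluctuation plus the conditional bias $\Delta_V$ (resp.\ miscalibration $\Delta_B$), then Hoeffding with a union bound over $\Pi$ allocating $\alpha/2$ to each family, then Lemma~\ref{lem:oracleappendix} applied deterministically with $\varepsilon_V=r_V+\Delta_V$ and $\varepsilon_B=r_B+\Delta_B$.

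On the constant, your own interval settles it. The claim $\hat\psi_{\pi,i}\in[-M\hat\nu^{-2},\,M+M\hat\nu^{-2}]$ gives an interval of width exactly $M(1+2\hat\nu^{-2})$, so Hoeffding yields $r_V(n,\alpha)$ directly. This is cleaner than the paper, which only records $|\hat\psi_{\pi,i}|\le M(1+2\hat\nu^{-2})$. Your fallback hedges are therefore unnecessary; note also that rescaling $\alpha$ could not absorb a constant factor in the width anyway.

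One caution about your side remark on $K$-fold cross-fitting: independence does \emph{not} survive conditioning on all fold fits. A fit trained on folds $j\neq k$ is a function of those folds' observations, so conditioning on it couples them and changes their conditional law. The Hoeffding step needs fits that are independent of the whole evaluation sample, as with the single train--selection split; the paper also leaves this implicit. Otherwise the argument must be run fold by fold, conditional on each fold's own fit.
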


\begin{proof}
Fix $\pi\in\Pi$ and condition on the fitted nuisances. By construction,
\[
\E[\widehat V_{\DR,\hat e}(\pi)\mid \{\hat q_i,\hat e_i\}_{i=1}^n]
=
V(\pi)+\mathrm{Bias}_{\hat e,\hat q}(\pi),
\]
where the conditional bias term is absorbed into $\Delta_V$ by definition. Under $0\le \hat q_i\le M$ and $\hat e_i\ge \hat\nu^2$,
\[
\left|
\frac{\pi_1(A_{1i}\mid X_i)\pi_2(A_{2i}\mid X_i,A_{1i})}
     {\hat e_i(X_i,A_{1i},A_{2i})}
\bigl(Y_i-\hat q_i(X_i,A_{1i},A_{2i})\bigr)
\right|
\le M\hat\nu^{-2},
\]
so
\[
|\hat\psi_{\pi,i}|\le M(1+2\hat\nu^{-2}).
\]
Hoeffding's inequality and a union bound over $\Pi$ therefore give
\[
\sup_{\pi\in\Pi}
\left|
\widehat V_{\DR,\hat e}(\pi)
-\E[\widehat V_{\DR,\hat e}(\pi)\mid \{\hat q_i,\hat e_i\}_{i=1}^n]
\right|
\le
r_V(n,\alpha)
\]
with probability at least $1-\alpha/2$. Combining this with the definition of $\Delta_V$ gives the first display.

For the burden term, each summand is nonnegative and bounded above by $\hat\nu^{-2}$ because the squared policy probabilities sum to at most one. Another Hoeffding-plus-union-bound argument gives
\[
\sup_{\pi\in\Pi}
\bigl|
\widehat B_{\hat e}(\pi)-B_{\hat e,n}(\pi)
\bigr|
\le
r_B(n,\alpha)
\]
with probability at least $1-\alpha/2$. Combining this with the definition of $\Delta_B$ proves the second display. Intersect the two events and apply Lemma~\ref{lem:oracleappendix} with $\varepsilon_V=r_V(n,\alpha)+\Delta_V$ and $\varepsilon_B=r_B(n,\alpha)+\Delta_B$.
\end{proof}

\begin{remark}[Product-error structure behind $\Delta_V$]
The value-bias term $\Delta_V$ vanishes if either the propensity model is exact or the reward nuisance is exact. More generally, the same decomposition used in standard doubly robust arguments gives a product-error dependence of the form
\[
\Delta_V
\le
\hat\nu^{-2}
\|\hat q-q\|_\infty
\|\hat e-e\|_\infty
\]
whenever the two sup-norm errors are finite. This is why Corollary~\ref{cor:estimatedappendix} is the right scoped extension: it keeps the practical estimated-propensity story without claiming that the full unknown-propensity problem is solved.
\end{remark}

\subsection{Function-Class Bridge}

\begin{corollary}[Function-class bridge for \CASP{} selection]
\label{cor:functionbridge}
Let $\Pi$ be an arbitrary policy class, not necessarily finite, and condition on fixed out-of-sample nuisance fits. Define the score class
\[
\mathcal{F}
=
\left\{
O\mapsto \hat\psi_\pi(O):\pi\in\Pi
\right\},
\]
and the burden class
\[
\mathcal{G}
=
\left\{
X\mapsto
\sum_{a_1\in\cA_1}\sum_{a_2\in S(X,a_1)}
\frac{\pi_1(a_1\mid X)^2\pi_2(a_2\mid X,a_1)^2}
     {\hat e(X,a_1,a_2)}
:
\pi\in\Pi
\right\}.
\]
Assume $\mathcal{F}$ and $\mathcal{G}$ admit envelopes $C_V$ and $C_B$. Then, with probability at least $1-\alpha$,
\[
\sup_{\pi\in\Pi}
\bigl|
\widehat V_{\DR,\hat e}(\pi)-V(\pi)
\bigr|
\le
2\mathfrak{R}_n(\mathcal{F})
+
C_V\sqrt{\frac{\log(4/\alpha)}{2n}}
+
\Delta_V,
\]
\[
\sup_{\pi\in\Pi}
\bigl|
\widehat B_{\hat e}(\pi)-B_\mu(\pi)
\bigr|
\le
2\mathfrak{R}_n(\mathcal{G})
+
C_B\sqrt{\frac{\log(4/\alpha)}{2n}}
+
\Delta_B.
\]
Consequently, the same selection reduction as in Lemma~\ref{lem:oracleappendix} gives a \CASP{} regret bound with the finite-class $\log|\Pi|/n$ terms replaced by the two empirical complexity terms above. If a finite policy library is viewed as a sieve approximation to a richer class, the resulting bound also gains the usual approximation error term.
\end{corollary}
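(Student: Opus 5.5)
The plan follows the finite-class argument of Theorem~\ref{thm:finite} and Corollary~\ref{cor:estimatedappendix}, with the union bound over $\Pi$ replaced by a symmetrization bound for suprema of empirical processes. Condition throughout on the out-of-sample nuisance fits $\{\hat q_i,\hat e_i\}$. Under this conditioning the evaluation observations are i.i.d.\ (per fold), and each $f\in\mathcal{F}$ and $g\in\mathcal{G}$ is a fixed measurable function. For the value term, decompose
\[
\widehat V_{\DR,\hat e}(\pi)-V(\pi)
=
\bigl(\widehat V_{\DR,\hat e}(\pi)-\E[\widehat V_{\DR,\hat e}(\pi)\mid \{\hat q_i,\hat e_i\}]\bigr)
+
\bigl(\E[\widehat V_{\DR,\hat e}(\pi)\mid \{\hat q_i,\hat e_i\}]-V(\pi)\bigr).
\]
The second piece is at most $\Delta_V$ uniformly over $\Pi$ by definition. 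The burden term splits the same way, into the centered deviation $\widehat B_{\hat e}(\pi)-B_{\hat e,n}(\pi)$ and the miscalibration $B_{\hat e,n}(\pi)-B_\mu(\pi)$, which is at most $\Delta_B$.

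The centered supremum $Z_{\mathcal{F}}=\sup_{f\in\mathcal{F}}|P_nf-Pf|$ is handled in two steps. First, McDiarmid's bounded-differences inequality applies: replacing one observation changes $Z_{\mathcal{F}}$ by at most $C_V/n$ when the envelope is read as a range bound (for one-sided or nonnegative classes; otherwise the constant doubles). So with probability at least $1-\alpha/2$,
\[
Z_{\mathcal{F}}\le \E Z_{\mathcal{F}}+C_V\sqrt{\log(2/\alpha)/(2n)} .
\]
Second, the standard ghost-sample and Rademacher symmetrization gives $\E Z_{\mathcal{F}}\le 2\mathfrak{R}_n(\mathcal{F})$. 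The same two steps applied to $\mathcal{G}$ with envelope $C_B$ give the burden bound with probability at least $1-\alpha/2$. Intersecting the two events gives probability at least $1-\alpha$. Using $\log(2/\alpha)\le\log(4/\alpha)$ yields exactly the displayed constants. The slack in the logarithm also leaves room to apply McDiarmid separately to the two one-sided suprema, which is how the absolute value is handled rigorously.

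The regret statement then follows by plugging $\varepsilon_V$ and $\varepsilon_B$, equal to the two right-hand sides, into Lemma~\ref{lem:oracleappendix}. That lemma uses only uniform closeness and the argmax property, never finiteness of $\Pi$, so it carries over verbatim. If $\pi^\star$ is defined over a richer class $\Pi'$ while selection is over a finite sieve $\Pi\subset\Pi'$, add and subtract $V(\pi^\star_\Pi)$. This produces the approximation error $V(\pi^\star_{\Pi'})-V(\pi^\star_\Pi)$, with $B_\mu$ evaluated at the sieve optimum.

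I expect the main obstacle to be the precise meaning of $\mathfrak{R}_n$ and of $\E[\cdot\mid \text{nuisances}]$ when the $\hat e_i$ are observation-specific. In that case the summands are independent but not identically distributed, so $\mathcal{F}$ is really a family of fold-indexed functions. I would first handle this by performing symmetrization fold by fold, where functions are identical within a fold, and defining $\mathfrak{R}_n$ as the corresponding averaged Rademacher complexity. McDiarmid and symmetrization both hold for independent non-identical summands, so the argument goes through with that reading.
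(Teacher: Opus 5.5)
Your proposal is correct and follows the same route as the paper's proof. You split each deviation into a centered empirical-process term plus the conditional bias/miscalibration terms $\Delta_V,\Delta_B$, control the centered suprema by symmetrization and bounded-differences concentration, and then feed the resulting $\varepsilon_V,\varepsilon_B$ into Lemma~\ref{lem:oracleappendix}. The paper only sketches this in three sentences, so your details (the envelope read as a range bound, the two one-sided suprema that use up the $\log(4/\alpha)$ budget, and independent but non-identical summands under observation-specific $\hat e_i$) fill in what the paper leaves implicit rather than departing from it.
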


\begin{proof}
Conditional on the nuisance fits, the score and burden processes are bounded empirical processes indexed by $\pi$. Symmetrization and empirical-process concentration therefore give the two displayed uniform deviations, with the conditional bias and burden-miscalibration terms still entering through $\Delta_V$ and $\Delta_B$. The final regret statement then follows by applying Lemma~\ref{lem:oracleappendix} exactly as in the finite-class proof.
\end{proof}

\subsection{Assumptions and Empirical Alignment}
\label{app:audit}

The formal assumptions of the theory hold by construction in simulation and only approximately in the reconstructed MovieLens study. In simulation, actions, feasible sets, rewards, and propensities are generated directly from the data-generating process. In the MovieLens application, feasibility and propensities are reconstructed from chronological prefixes and explicit logging rules, so the results should be interpreted as semi-synthetic evaluation under observational ratings.

This distinction motivates the joint reporting of value, burden, and support diagnostics in the application. Misspecified reconstructed propensities can bias both value and burden, and weak support can lead to unstable policy comparisons.

\subsection{Experimental Details}
\label{app:expdetails}

This appendix contains the implementation details needed to interpret and reproduce the empirical results. These include the simulation data-generating mechanisms, the shared generator and policy libraries, feasible-set construction rules, tuning grids, tie-breaking conventions, sample-splitting rules, replication settings, and the additional robustness analyses omitted from the main text.

Across all reported comparisons, competitors share the same logged samples, nuisance-fit pipeline, policy library, and evaluation protocol. This common design ensures that observed differences are attributable to the learning rule rather than to implementation asymmetries.

\subsection{Full Per-Block Comparator Tables}

This subsection reports the full comparator tables for each simulation block using the same comparison layer as the main results: stagewise, plug-in, DR value only, DR-LCB, \CASP{}, \citet{ma2020twostage} OPL, \citet{wang2025candidategen} generator selection, and oracle.

\begin{table}[!htbp]
\centering
\small
\begin{tabular}{@{}lrrrr@{}}
\toprule
Method & Value & Regret & Burden & Stability \\
\midrule
Stagewise & 0.2083 & 0.6417 & 2.00 & 0.000 \\
Plug-in & 0.8500 & 0.0000 & 2.00 & 0.000 \\
DR value only & 0.8500 & 0.0000 & 2.00 & 1.000 \\
DR-LCB 0.50 & 0.8500 & 0.0000 & 2.00 & 1.000 \\
CASP 0.05 & 0.8500 & 0.0000 & 2.00 & 1.000 \\
Ma-style OPL & 0.8500 & 0.0000 & 2.00 & 1.000 \\
Wang-style generator & 0.8500 & 0.0000 & 2.00 & 1.000 \\
Oracle & 0.8500 & 0.0000 & 2.00 & 0.000 \\
\bottomrule
\end{tabular}
\caption{Block 1 full comparator table. For Blocks 2--5 the entries are sweep averages; Block 1 is reported directly.}
\label{tab:app-block1-full}
\end{table}

\begin{table}[!htbp]
\centering
\small
\begin{tabular}{@{}lrrrr@{}}
\toprule
Method & Value & Regret & Burden & Stability \\
\midrule
Stagewise & 0.6026 & 0.1645 & 16.63 & 0.000 \\
Plug-in & 0.6246 & 0.1424 & 16.98 & 0.000 \\
DR value only & 0.6558 & 0.1112 & 14.01 & 0.139 \\
DR-LCB 0.50 & 0.6562 & 0.1108 & 13.97 & 0.139 \\
CASP 0.05 & 0.6568 & 0.1103 & 13.77 & 0.135 \\
Ma-style OPL & 0.6532 & 0.1138 & 14.63 & 0.028 \\
Wang-style generator & 0.6278 & 0.1392 & 16.74 & 0.357 \\
Oracle & 0.7670 & 0.0000 & 15.08 & 0.000 \\
\bottomrule
\end{tabular}
\caption{Block 2 full comparator table. For Blocks 2--5 the entries are sweep averages; Block 1 is reported directly.}
\label{tab:app-block2-full}
\end{table}

\begin{table}[!htbp]
\centering
\small
\begin{tabular}{@{}lrrrr@{}}
\toprule
Method & Value & Regret & Burden & Stability \\
\midrule
Stagewise & 0.7271 & 0.1308 & 37.84 & 0.000 \\
Plug-in & 0.7717 & 0.0862 & 44.60 & 0.000 \\
DR value only & 0.7649 & 0.0929 & 30.18 & 0.170 \\
DR-LCB 0.50 & 0.7645 & 0.0934 & 29.03 & 0.153 \\
CASP 0.05 & 0.7572 & 0.1007 & 21.02 & 0.111 \\
Ma-style OPL & 0.7722 & 0.0856 & 38.44 & 0.052 \\
Wang-style generator & 0.7430 & 0.1148 & 44.26 & 0.347 \\
Oracle & 0.8579 & 0.0000 & 41.97 & 0.000 \\
\bottomrule
\end{tabular}
\caption{Block 3 full comparator table. For Blocks 2--5 the entries are sweep averages; Block 1 is reported directly.}
\label{tab:app-block3-full}
\end{table}

\begin{table}[!htbp]
\centering
\small
\begin{tabular}{@{}lrrrr@{}}
\toprule
Method & Value & Regret & Burden & Stability \\
\midrule
Stagewise & 0.7023 & 0.1658 & 76.42 & 0.000 \\
Plug-in & 0.7354 & 0.1328 & 86.25 & 0.000 \\
DR value only & 0.7321 & 0.1360 & 57.86 & 0.156 \\
DR-LCB 0.50 & 0.7335 & 0.1346 & 58.47 & 0.167 \\
CASP 0.05 & 0.7324 & 0.1358 & 51.64 & 0.104 \\
Ma-style OPL & 0.7383 & 0.1299 & 73.97 & 0.042 \\
Wang-style generator & 0.7123 & 0.1559 & 81.32 & 0.312 \\
Oracle & 0.8682 & 0.0000 & 72.80 & 0.000 \\
\bottomrule
\end{tabular}
\caption{Block 4 full comparator table. For Blocks 2--5 the entries are sweep averages; Block 1 is reported directly.}
\label{tab:app-block4-full}
\end{table}

\begin{table}[!htbp]
\centering
\small
\begin{tabular}{@{}lrrrr@{}}
\toprule
Method & Value & Regret & Burden & Stability \\
\midrule
Stagewise & 0.6763 & 0.1486 & 24.70 & 0.000 \\
Plug-in & 0.7170 & 0.1079 & 32.09 & 0.000 \\
DR value only & 0.7176 & 0.1074 & 23.02 & 0.181 \\
DR-LCB 0.50 & 0.7159 & 0.1091 & 22.58 & 0.188 \\
CASP 0.05 & 0.7125 & 0.1124 & 20.66 & 0.146 \\
Ma-style OPL & 0.7226 & 0.1024 & 27.87 & 0.062 \\
Wang-style generator & 0.7010 & 0.1240 & 31.67 & 0.368 \\
Oracle & 0.8250 & 0.0000 & 28.44 & 0.000 \\
\bottomrule
\end{tabular}
\caption{Block 5 full comparator table. For Blocks 2--5 the entries are sweep averages; Block 1 is reported directly.}
\label{tab:app-block5-full}
\end{table}

\subsection{CASP Ablation Diagnostics}

\begin{table*}[!t]
\centering
\small
\begin{tabular}{@{}llrrrr@{}}
\toprule
Block & Ablation & Avg.\ value & Avg.\ regret & Avg.\ burden & Avg.\ stability \\
\midrule
B2 & Normalized full & 0.6587 & 0.1084 & 13.32 & 0.151 \\
B2 & Normalized stage-1 only & 0.6560 & 0.1110 & 13.96 & 0.139 \\
B2 & Normalized stage-2 only & 0.6588 & 0.1082 & 13.38 & 0.151 \\
B2 & Raw full & 0.6586 & 0.1084 & 12.15 & 0.591 \\
B3 & Normalized full & 0.7383 & 0.1196 & 14.94 & 0.104 \\
B3 & Normalized stage-1 only & 0.7585 & 0.0993 & 18.71 & 0.135 \\
B3 & Normalized stage-2 only & 0.7397 & 0.1182 & 23.14 & 0.132 \\
B3 & Raw full & 0.6686 & 0.1893 & 9.02 & 0.667 \\
B4 & Normalized full & 0.7153 & 0.1528 & 34.20 & 0.125 \\
B4 & Normalized stage-1 only & 0.7355 & 0.1326 & 52.93 & 0.177 \\
B4 & Normalized stage-2 only & 0.7172 & 0.1510 & 38.33 & 0.125 \\
B4 & Raw full & 0.6759 & 0.1923 & 21.59 & 0.844 \\
B5 & Normalized full & 0.6947 & 0.1302 & 14.28 & 0.132 \\
B5 & Normalized stage-1 only & 0.7085 & 0.1164 & 17.20 & 0.132 \\
B5 & Normalized stage-2 only & 0.7013 & 0.1236 & 18.76 & 0.160 \\
B5 & Raw full & 0.6467 & 0.1782 & 8.42 & 0.757 \\
\bottomrule
\end{tabular}
\caption{CASP ablation summary across Blocks 2--5. The normalized full burden is the main ablation baseline; the stage-1-only and stage-2-only variants isolate which component of the support burden carries the empirical signal; the raw full burden documents the pre-calibration failure mode.}
\label{tab:app-ablation-summary}
\end{table*}

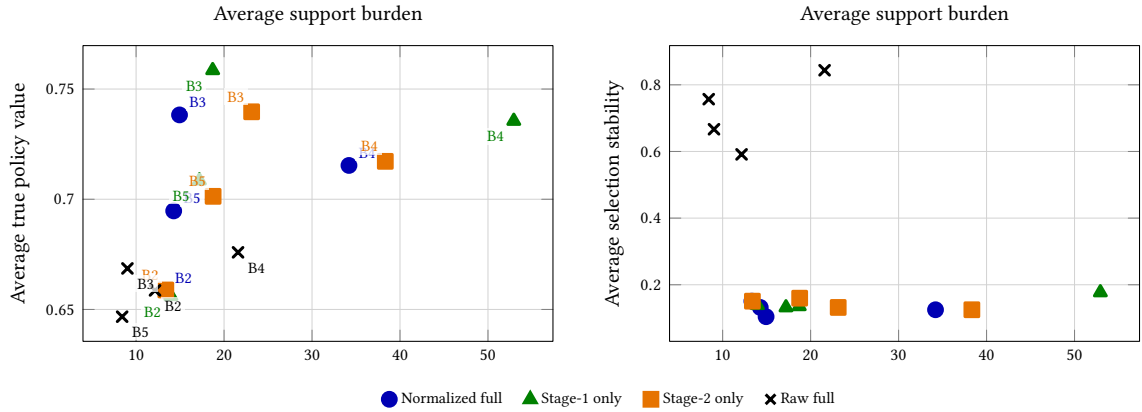
\begin{figure*}[t]
\centering
\begin{tikzpicture}
\begin{groupplot}[
    caspgroup,
    group style={group size=2 by 1, horizontal sep=1.55cm},
    width=0.44\textwidth,
    height=0.31\textwidth,
    xlabel={Average support burden},
    every axis plot/.append style={only marks, mark size=2.6pt},
]
\nextgroupplot[
    ylabel={Average true policy value},
    legend to name=ablationlegend,
    legend columns=4,
    legend style={font=\scriptsize, draw=none, fill=none, /tikz/every even column/.append style={column sep=0.7em}},
]
\addplot[blue!70!black, mark=*,
    nodes near coords={\pgfplotspointmeta},
    point meta=explicit symbolic,
    every node near coord/.append style={font=\scriptsize, anchor=south west, xshift=2pt, yshift=1pt, text=blue!70!black, fill=white, fill opacity=0.75, text opacity=1, inner sep=1.2pt}
] table[x=normalized_full_burden,y=normalized_full_value,meta=block_label,col sep=comma]{figures/phase3_external_baselines/ablation_frontier_key.csv};
\addlegendentry{Normalized full}
\addplot[green!50!black, mark=triangle*,
    nodes near coords={\pgfplotspointmeta},
    point meta=explicit symbolic,
    every node near coord/.append style={font=\scriptsize, anchor=north east, xshift=-2pt, yshift=-2pt, text=green!50!black, fill=white, fill opacity=0.75, text opacity=1, inner sep=1.2pt}
] table[x=stage1_only_burden,y=stage1_only_value,meta=block_label,col sep=comma]{figures/phase3_external_baselines/ablation_frontier_key.csv};
\addlegendentry{Stage-1 only}
\addplot[orange!90!black, mark=square*,
    nodes near coords={\pgfplotspointmeta},
    point meta=explicit symbolic,
    every node near coord/.append style={font=\scriptsize, anchor=south east, xshift=-1pt, yshift=2pt, text=orange!90!black, fill=white, fill opacity=0.75, text opacity=1, inner sep=1.2pt}
] table[x=stage2_only_burden,y=stage2_only_value,meta=block_label,col sep=comma]{figures/phase3_external_baselines/ablation_frontier_key.csv};
\addlegendentry{Stage-2 only}
\addplot[black, mark=x,
    nodes near coords={\pgfplotspointmeta},
    point meta=explicit symbolic,
    every node near coord/.append style={font=\scriptsize, anchor=north west, xshift=2pt, yshift=-2pt, text=black, fill=white, fill opacity=0.75, text opacity=1, inner sep=1.2pt}
] table[x=raw_full_burden,y=raw_full_value,meta=block_label,col sep=comma]{figures/phase3_external_baselines/ablation_frontier_key.csv};
\addlegendentry{Raw full}
\nextgroupplot[
    ylabel={Average selection stability},
]
\addplot[blue!70!black, mark=*] table[x=normalized_full_burden,y=normalized_full_stability,col sep=comma]{figures/phase3_external_baselines/ablation_frontier_key.csv};
\addplot[green!50!black, mark=triangle*] table[x=stage1_only_burden,y=stage1_only_stability,col sep=comma]{figures/phase3_external_baselines/ablation_frontier_key.csv};
\addplot[orange!90!black, mark=square*] table[x=stage2_only_burden,y=stage2_only_stability,col sep=comma]{figures/phase3_external_baselines/ablation_frontier_key.csv};
\addplot[black, mark=x] table[x=raw_full_burden,y=raw_full_stability,col sep=comma]{figures/phase3_external_baselines/ablation_frontier_key.csv};
\end{groupplot}
\path (group c1r1.south west) -- (group c2r1.south east)
node[midway,below=12pt] {\pgfplotslegendfromname{ablationlegend}};
\end{tikzpicture}
\caption{Cross-block CASP ablation frontier over Blocks 2--5. The normalized full burden is the main ablation baseline; the stage-1-only and stage-2-only variants show how much of the empirical signal survives when one component of the burden is removed; the raw full variant illustrates the pre-calibration failure mode.}
\label{fig:app-ablation-frontier}
\Description{Two-panel scatter plot for CASP ablations across Blocks 2 through 5. The left panel shows average value against average support burden for four ablation variants: normalized full, stage-1-only, stage-2-only, and raw full, with each point labeled by block. The raw full variant achieves the lowest burden but much weaker value in Blocks 3 through 5, while the normalized variants occupy higher-value regions. The right panel shows average selection stability against average burden for the same ablations; the raw full variant is highly mode-concentrated, indicating overly aggressive conservatism, while the normalized variants are less collapsed.}
\end{figure*}

The ablation results support two practical conclusions. First, the raw full burden is too aggressive in the harder blocks: it drives burden down, but it does so by collapsing onto highly repetitive conservative policies with substantially worse value. Second, once the penalty is normalized, much of the empirical benefit survives. In Blocks~3--5 the stage-1-only ablation is the strongest partial-burden variant, which suggests that a large share of the useful conservatism enters through the stage-1 support pathway. The full normalized burden remains the default because it is the most methodologically aligned version of \CASP{}, but the ablations help explain where its empirical signal comes from.

\subsection{Block 5 Precision Verification Addendum}

The main empirical narrative is intentionally not driven by post hoc reruns. Still, because the sample-size block sits close to the deployment-facing frontier that matters most for interpretation, we include a verification-focused follow-up restricted to the higher-sample-size segment $n\in\{600,1200,2400,4800\}$. This check doubles the replication count relative to the original Block~5 design. Its purpose is to tighten the sample-size evidence, not to overwrite the primary external-baseline comparison.

\begin{table}[!htbp]
\centering
\small
\begin{tabular}{@{}lrrrr@{}}
\toprule
Method & Avg.\ value & Avg.\ regret & Avg.\ burden & Avg.\ stability \\
\midrule
Stagewise & 0.6840 & 0.1432 & 26.57 & 0.000 \\
Plug-in & 0.7375 & 0.0896 & 32.83 & 0.000 \\
DR value only & 0.7405 & 0.0867 & 23.33 & 0.135 \\
DR-LCB 0.50 & 0.7381 & 0.0891 & 22.25 & 0.130 \\
CASP 0.05 & 0.7355 & 0.0916 & 19.80 & 0.104 \\
Ma-style OPL & 0.7456 & 0.0815 & 28.60 & 0.036 \\
Wang-style generator & 0.7201 & 0.1071 & 30.45 & 0.312 \\
Oracle & 0.8272 & 0.0000 & 27.43 & 0.000 \\
\bottomrule
\end{tabular}
\caption{Verification-focused Block 5 precision rerun over sample sizes $n\in\{600,1200,2400,4800\}$. The rerun preserves the same qualitative story as the locked external-baseline suite: Ma-style OPL remains strongest on raw value, while \CASP{} remains the lowest-burden learned selector.}
\label{tab:block5-precision-verification}
\end{table}

The addendum confirms the same qualitative reading as the main sample-size block. \citet{ma2020twostage} OPL remains the strongest raw-value learned comparator in this regime, while \CASP{} remains the lowest-burden learned selector by a substantial margin. In the precision rerun, \CASP{} averages burden $19.80$ versus $28.60$ for \citet{ma2020twostage} OPL, while \citet{ma2020twostage} retains the raw-value lead ($0.7456$ versus $0.7355$). The rerun reinforces the deployment-aware value--burden interpretation without changing the main claim.

\subsection{Supplementary Application Tables}

This subsection reports additional application tables that complement the main results.

\begin{table*}[!t]
\centering
\small
\begin{tabular}{lrrrrrr}
\toprule
Comparator & Oracle & DR & Burden & ESS & Max $w$ & Mode freq. \\
\midrule
Behavior & 0.474 & 0.775 & 1.0 & 5000.0 & 1.0 & -- \\
Random uniform & 0.494 & 0.720 & $5.39\times 10^6$ & 31475.0 & 0.9 & -- \\
Reconstructed oracle & 0.543 & 0.646 & $7.12\times 10^8$ & 226.4 & 75.4 & -- \\
Stagewise proxy & 0.465 & 0.845 & $7.64\times 10^8$ & 913.4 & 77.2 & -- \\
Plug-in reward & 0.442 & 0.888 & $6.58\times 10^8$ & 352.4 & 78.8 & -- \\
DR value only & 0.442 & 0.888 & $6.58\times 10^8$ & 352.4 & 78.8 & 1.00 \\
DR-LCB ($\beta=0.50$) & 0.442 & 0.888 & $6.58\times 10^8$ & 352.4 & 78.8 & 1.00 \\
\CASP{} ($\lambda=0.05$) & 0.453 & 0.831 & 39.6 & 212.9 & 78.3 & 0.55 \\
Ma-style OPL & 0.445 & 0.886 & $6.64\times 10^8$ & 405.0 & 78.8 & 0.10 \\
Wang-style generator & 0.446 & 0.898 & $6.41\times 10^8$ & 572.8 & 77.6 & 0.90 \\
\bottomrule
\end{tabular}
\caption{Full application comparator summary for the accepted reconstructed \texttt{MovieLens 1M} run, including effective-sample-size and tail-weight diagnostics.}
\label{tab:app-full-comparators}
\end{table*}

\begin{table*}[!t]
\centering
\small
\begin{tabular}{lrrrrrr}
\toprule
Ablation & Oracle & DR & Burden & ESS & Max $w$ & Unique policies \\
\midrule
Normalized full & 0.453 & 0.831 & 39.6 & 212.9 & 78.3 & 5 \\
Normalized stage-1 only & 0.453 & 0.831 & 39.6 & 212.9 & 78.3 & 4 \\
Normalized stage-2 only & 0.442 & 0.888 & $6.58\times 10^8$ & 352.4 & 78.8 & 1 \\
Raw full & 0.519 & 0.816 & 35.8 & 421.5 & 72.9 & 6 \\
\bottomrule
\end{tabular}
\caption{Application ablation summary at $\lambda=0.05$. The raw-full variant is retained as an appendix calibration sensitivity rather than a headline main-text result.}
\label{tab:app-ablation}
\end{table*}

\begin{table*}[!t]
\centering
\small
\begin{tabular}{rrrrrr}
\toprule
$\lambda$ & DR value & Burden & ESS & Max $w$ & Mode freq. \\
\midrule
0.000 & 0.888 & $6.58\times 10^8$ & 352.4 & 78.8 & 1.00 \\
0.010 & 0.831 & 39.6 & 212.9 & 78.3 & 0.70 \\
0.020 & 0.831 & 39.6 & 212.9 & 78.3 & 0.70 \\
0.050 & 0.831 & 39.6 & 212.9 & 78.3 & 0.55 \\
0.100 & 0.831 & 39.6 & 212.9 & 78.3 & 0.50 \\
0.200 & 0.830 & 39.4 & 229.6 & 77.2 & 0.45 \\
\bottomrule
\end{tabular}
\caption{\CASP{} $\lambda$-sensitivity on the reconstructed \texttt{MovieLens 1M} application. The grid is generated from the same completed application run as the main comparator table.}
\label{tab:app-lambda-sensitivity}
\end{table*}

\begin{table}[!htbp]
\centering
\small
\begin{tabular}{lrrr}
\toprule
Comparator & Burden & ESS & Max $w$ \\
\midrule
\CASP{} ($\lambda=0.05$) & 39.6 & 212.9 & 78.3 \\
DR value only & $6.58\times 10^8$ & 352.4 & 78.8 \\
DR-LCB ($\beta=0.50$) & $6.58\times 10^8$ & 352.4 & 78.8 \\
Ma-style OPL & $6.64\times 10^8$ & 405.0 & 78.8 \\
Wang-style generator & $6.41\times 10^8$ & 572.8 & 77.6 \\
\bottomrule
\end{tabular}
\caption{Application effective-sample-size and tail-weight diagnostics. These diagnostics separate the support-burden signal from conventional ESS and maximum-weight summaries, which are reported for calibration rather than used as the headline claim.}
\label{tab:app-weight-diagnostics}
\end{table}

\begin{table*}[!t]
\centering
\small
\begin{tabular}{lrr}
\toprule
Selector & Burden & Floor-implied off-support mass \\
\midrule
\CASP{} ($\lambda=0.05$) & 39.6 & $<0.001$ \\
DR value only & $6.58\times 10^8$ & 0.658 \\
DR-LCB ($\beta=0.50$) & $6.58\times 10^8$ & 0.658 \\
Ma-style OPL & $6.64\times 10^8$ & 0.664 \\
Wang-style generator & $6.41\times 10^8$ & 0.641 \\
Stagewise proxy & $7.64\times 10^8$ & 0.764 \\
Raw full & 35.8 & $<0.001$ \\
\bottomrule
\end{tabular}
\vspace{0.7em}
\begin{tabular}{lrrr}
\toprule
Generator & Zero stage-1 support share & DR-only selected share & \CASP{} selected share \\
\midrule
Popularity & 0.165 & 0.076 & 0.561 \\
Genre & 0.870 & 0.000 & 0.119 \\
Collaborative & 0.693 & 0.924 & 0.307 \\
LongTail & 0.861 & 0.000 & 0.013 \\
\bottomrule
\end{tabular}
\caption{Application support-violation diagnostics. The first panel reports the burden-scale implication of the $10^{-9}$ denominator floor: values near $0.65$ mean the selector frequently asks for action pairs outside reconstructed support. The second panel explains the main source of that behavior: DR-value-only concentrates on the collaborative generator, which has zero reconstructed stage-1 support on many evaluation contexts, while \CASP{} redistributes mass toward supported generators.}
\label{tab:app-support-violation}
\end{table*}

\begin{table*}[!t]
\centering
\small
\begin{tabularx}{\textwidth}{@{}Xrrrrr@{}}
\toprule
Variant & CASP DR & CASP burden & DR-only burden & Ma burden & CASP ESS \\
\midrule
Baseline cached L=30, rating >=4 & 0.831 & 39.6 & $6.58\times 10^8$ & $6.64\times 10^8$ & 212.9 \\
Derived L=20 candidate truncation with support refiltering & 0.842 & 24.9 & $7.09\times 10^8$ & $7.44\times 10^8$ & 253.3 \\
Reward relabeling rating >=5 on accepted support & 0.495 & 40.0 & $6.94\times 10^8$ & $6.94\times 10^8$ & 232.0 \\
Stage-1 logging smoothing +0.20 on accepted support & 0.824 & 41.1 & $6.68\times 10^8$ & $6.70\times 10^8$ & 158.9 \\
Stage-2 logging smoothing +0.10 within feasible sets & 0.831 & 39.8 & $6.58\times 10^8$ & $6.63\times 10^8$ & 213.9 \\
\bottomrule
\end{tabularx}
\caption{Full cached MovieLens robustness suite. Each row reruns the full comparator layer on a cached or derived local sensitivity variant of the accepted reconstructed pool.}
\label{tab:app-robustness-full}
\end{table*}

\begin{table*}[!t]
\centering
\small
\begin{tabular}{lc}
\toprule
Diagnostic & Value \\
\midrule
Final paper-facing contexts & 25,000 \\
Strict $\ge 2$-support share & 0.36388 \\
Dominant stage-1 share & 0.55368 \\
Minimum generator share & 0.10932 \\
Mean support-generator count & 1.411 \\
Share with exactly two support generators & 0.399 \\
Mean stage-1 entropy on eval slice & 0.2828 \\
Maximum support-generator count & 3 \\
Fallback singleton pool mode & blended\_strict\_plus\_capped\_diverse\_singleton\_fallback \\
\bottomrule
\end{tabular}
\caption{Application support and logging diagnostics for the accepted reconstructed \texttt{MovieLens 1M} pool.}
\label{tab:app-support}
\end{table*}

\end{document}